\documentclass[letterpaper,11pt]{article}

\usepackage{verbatim}
\usepackage{latexsym}
\usepackage{amsmath,amssymb,amsfonts}
\usepackage{fullpage}
\usepackage{epsfig,ifpdf,graphics}
%\usepackage{times}

% qed generate manually --- hacked from amsthm

\newenvironment{description*}%
  {\vspace{-1ex}\begin{description}%
    \setlength{\itemsep}{-0.5ex}%
    \setlength{\parsep}{0pt}}%
  {\end{description}}
\newenvironment{itemize*}%
  {\vspace{-1ex}\begin{itemize}%
    \setlength{\itemsep}{-0.5ex}%
    \setlength{\parsep}{0pt}}%
  {\end{itemize}}
\newenvironment{enumerate*}%
  {\vspace{-1ex}\begin{enumerate}%
    \setlength{\itemsep}{-0.5ex}%
    \setlength{\parsep}{0pt}}%
  {\end{enumerate}}

\newcommand\symmdiff{\triangle}

\newcommand\drop[1]{}
\newcommand{\Pq}{P^{\,q}}
\newcommand{\Pql}{P^{\,q}_{\ell}}

\newcommand{\Plc}{P_{\ell,c}}

\newcommand\choicesl{{\textnormal{choices}_{\ell}}}
\newcommand\choiceslc{{\textnormal{choices}_{\ell,c}}}
\newcommand\choicesql{{\textnormal{choices}^{q}_{\ell}}}

\newcommand\mq{{\ceil{q\ell}}}

\newcommand\req[1]{(\ref{#1})}
\newcommand\tref[1]{~\ref{#1}}

% Complex \xxx for making notes of things to do.  Use \xxx{...} for general
% notes, and \xxx[who]{...} if you want to blame someone in particular.
% Puts text in brackets and in bold font, and normally adds a marginpar
% with the text ``xxx'' so that it is easy to find.  On the other hand, if
% the comment is in a minipage, figure, or caption, the xxx goes in the text,
% because marginpars are not possible in these situations.
{\makeatletter
 \gdef\xxxmark{%
   \expandafter\ifx\csname @mpargs\endcsname\relax % in minipage?
     \expandafter\ifx\csname @captype\endcsname\relax % in figure/caption?
       \marginpar{xxx}% not in a caption or minipage, can use marginpar
     \else
       xxx % notice trailing space
     \fi
   \else
     xxx % notice trailing space
   \fi}
 \gdef\xxx{\@ifnextchar[\xxx@lab\xxx@nolab}
 \long\gdef\xxx@lab[#1]#2{{\bf [\xxxmark #2 ---{\sc #1}]}}
 \long\gdef\xxx@nolab#1{{\bf [\xxxmark #1]}}
 % This turns them off:
%\long\gdef\xxx@lab[#1]#2{}\long\gdef\xxx@nolab#1{}%
}

\newtheorem{theorem}{Theorem}
\newtheorem{lemma}[theorem]{Lemma}

\newtheorem{corollary}[theorem]{Corollary}

\newtheorem{definition}[theorem]{Definition}

\newcommand{\qed}{\hbox{\rule{6pt}{6pt}}}
\newenvironment{proof}[1][]{\paragraph{Proof{#1}}}{\hfill\qed\medskip\\}

\newcommand{\tO}{\widetilde{O}}

\newcommand{\eps}{\varepsilon}

\newcommand{\fct}{\rightarrow}

\newcommand{\A}{\Phi}
\newcommand{\Ain}{\Phi}
\newcommand{\Aout}{\Psi}
\newcommand{\hi}{h}
\newcommand\cconf{\phi}
\newcommand\Dconf{\gamma}

\newcommand{\ceil}[1]{\lceil {#1}\rceil}

\let\phi=\varphi

\newcommand{\calH}{\mathcal{H}}

\title{Simple Tabulation, Fast Expanders, Double Tabulation, and High Independence\footnote{This paper was published in the Proceedings of the 54nd IEEE Symposium on Foundations of Computer Science (FOCS'13), pages 90--99, 2013 \cite{Tho13:simple-simple}. \textcopyright IEEE.}}

\author{Mikkel Thorup\footnote{Research 
supported in part by an Advanced Grant from the Danish Council for Independent Research under the Sapere Aude research carrier programme.
Part of this research was done while the author was at AT\&T Labs--Research.}\\
University of Copenhagen\\
\texttt{mikkel2thorup@gmail.com}}

\begin{document}

\maketitle

\begin{abstract} Simple tabulation dates back to Zobrist in 1970 who used it for game playing programs.  Keys are viewed as consisting of $c$ characters from some alphabet $\Ain$. We
initialize $c$ tables $h_0, \dots, h_{c-1}$ mapping characters to random
hash values. A key $x=(x_0, \dots, x_{c-1})$ is hashed to $h_0[x_0] \oplus
\cdots \oplus h_{c-1}[x_{c-1}]$, where $\oplus$ denotes bit-wise exclusive-or.
The scheme is extremely fast when the character hash tables $h_i$ are
in cache.
Simple tabulation hashing is not even 4-independent, but we show here 
that if we apply it twice, then we do get high independence. 
First we hash to some intermediate keys that are 6 times longer than
the original keys, and then we hash the intermediate keys to the final
hash values.

The intermediate keys have $d=6c$ characters from $\Ain$. We can then
view the hash function as a highly unbalanced 
bipartite graph with keys on
one side, each with edges to $d$ output characters on the other
side. We show that this graph has nice expansion properties, and from
that it follows that if we perform another level of simple tabulation
on the intermediate keys, then the composition is a highly independent
hash function. More precisely, the independence we get is
$|\Ain|^{\Omega(1/c)}$. In our $O$-notation, we view both $|\Ain|$ and $c$ is going to
infinity, but with $c$ much smaller than $|\Ain|$.

Our space is $O(c|\Ain|)$ and the hash function is evaluated in
$O(c)$ time.  Siegel [FOCS'89, SICOMP'04] has proved that with this
space, if the hash function is evaluated in $o(c)$ time, then the
independence can only be $o(c)$, so our evaluation time is best
possible for $\Omega(c)$ independence---our independence is much
higher if $c=|\Ain|^{o(1/c)}$.  

Siegel used $O(c)^c$ evaluation time to get the same independence with similar
space. Siegel's main focus was $c=O(1)$,  but
we are exponentially faster when $c=\omega(1)$. 

Applying  our scheme recursively, we can increase our independence to
$|\Ain|^{\Omega(1)}$ with $o(c^{\log c})$ evaluation time.
Compared with Siegel's scheme this is both faster and higher independence.

Siegel states about his scheme that it is 
``far too slow for any practical application''. Our scheme 
is trivial to implement, and it does provide realistic implementations of 100-independent hashing for, say,
32-bit and 64-bit keys.\bigskip
\end{abstract}

\newpage 
\section{Introduction}
\paragraph{Independent hashing} The concept of $k$-independent hashing  
was introduced by Wegman and
Carter~\cite{wegman81kwise} at FOCS'79 and has been the cornerstone of
our understanding of hash functions ever since. The hash functions map
keys from some universe $U$ to some range $R$ of hash values.
Formally, a family $\calH = \{ h \mid  U \to R \}$ of hash functions is
$k$-independent if (1) for any distinct keys $x_1, \dots, x_k \in U$,
the hash values $h(x_1), \dots, h(x_k)$ are independent random
variables when $h$ is picked at random from $\calH$; and (2) for any fixed $x$, $h(x)$ is uniformly distributed
in $R$. By a $k$-independent hash function we refer to a function
chosen at random from such a family. Often the family is only given
implicitly as all possible choices some random parameters 
defining the function.

As the concept of independence is fundamental to probabilistic
analysis, $k$-independent hash functions are both natural and powerful in
algorithm analysis. They allow us to replace the heuristic assumption
of truly random hash functions with real (implementable) hash
functions that are still ``independent enough'' to yield provable
performance guarantees. We are then left with the natural goal of
understanding the independence required by algorithms.
When first we have proved that $k$-independence suffices for a
hashing-based randomized algorithm, then we are free to use {\em
  any\/} $k$-independent hash function.  

Let $U$ and $R$ be the sets $U=[u]=\{0,\ldots,u-1\}$ and
$R=[r]=\{0,\ldots,r-1\}$. The canonical construction of a $k$-independent family
is a polynomial of degree $k-1$ over a prime field $\mathbb
Z_p$ where $p\geq u$.  The random parameters are the coefficients $a_0, \dots, a_{k-1}
\in \mathbb Z_p$. The hash function is then
\begin{equation}\label{eq:polynomial}
 h(x) = \Big( \big( a_{k-1} x^{k-1} + \cdots + a_1 x + a_0 \big)
        \bmod{p} \Big) \bmod{r} 
\end{equation}
For $p\gg r$, the hash function is statistically close to $k$-independent. 
One thing that makes polynomial hashing over $\mathbb Z_p$ 
slow for $\wp> 2^{32}$ is that each multiplication over $\mathbb Z_p$ translates into
multiple 64-bit multiplications that due to discarded overflow can
only do exact multiplication of 32-bit numbers. The ``mod $p$''
operation is very expensive in general, but \cite{carter77universal}
suggests using a Mersenne prime $p$ such as $2^{61}-1$ or $2^{89}-1$,
and then 'mod $p$'  can be made very fast.

\paragraph{Word RAM model} We are assuming the word RAM model where
the operations on words are those available in a standard programming
language such as C~\cite{KR88}. A word defines the maximal
unit we can operate on in constant time. For simplicity, we 
assume that each key or hash value fits in a single  word. This implies that the time it takes to evaluate the degree $k-1$ polynomial from \req{eq:polynomial} is $O(k)$.
The Random Access Memory (RAM) implies that we can create tables, accessing
entries in constant time based on indices
computed from key values. Such random
access memory has been assumed for hash tables since Dumey introduced
them in 1956 \cite{Dum56}.

\paragraph{Time-space trade-offs}
To get faster hash functions, we implement them in two phases. First
we have a preprocessing phase where we based on a random seed
construct a representation of the hash function. We do not worry too
much about the resources used constructing the representation, but we
do worry about the space of the representation, measured in number of
words.  Next we have a typically deterministic query phase where we
for a given key compute the hash value using the representation. Table
\ref{tab:hashing} presents an overview of the results in this model
that will be discussed here in the introduction. In our $O$-notation,
we view both $u$ and $c$ as going to infinity, but $c$ is much smaller
than $u$.

\begin{table}
\begin{center}
\begin{tabular}{|c|c|c|l|}\hline
\multicolumn{4}{|c|}{Non-Constructive Cell-Probe Model}\\\hline
Space &  Probes & Independence  & Reference\\\hline
$u^{1/c}$ & - & $\leq u^{1/c}$    & Trivial\\
$u^{1/c}$ & $t<c$ & $\leq t$    &\cite{siegel04hash}\\
$u^{1/c}$ & $O(c)$ & $u^{\Omega(1/c)}$    &\cite{siegel04hash}\\[1ex]\hline
\multicolumn{4}{|c|}{C-programmable Word RAM model}\\\hline
Space & Time & Independence  & Reference \\\hline
$k$ & $O(k)$ & $k$    & Polynomial \\
$u$ & $1$ & $u$    & Complete table \\
$u^{1/c}$ & $O(c)^{\,c}$ & $u^{\Omega(1/c^2)}$    &\cite{siegel04hash}\\
$u^{1/c}$ & $O(ck)$ & $k$    &  \cite{dietzfel03tabhash,KW12,thorup12kwise}\\
$u^{1/c}$ & $O(c)$ & $u^{\Omega(1/c^2)}$    & This paper\\
$u^{1/c}$ & $O(c^{\,\lg c})$ & $u^{\Omega(1/c)}$  & This paper\\\hline
\end{tabular}
\end{center}
\caption{Hashing with preprocessed representation.}\label{tab:hashing}
\end{table}

In the case of polynomial hashing, the preprocessing
just stores the coefficients $a_0,....a_k$ in $k$ words. Unfortunately,
to find the hash value of a key $x$, we have to access all $k$ words in $O(k)$ time.
Another extreme would be to store the hash values of all possible keys
in one big table of size $u$. Then we can find the hash of a key in constant time by a single lookup. 

There has been interesting work on representing a high degree
polynomial for fast evaluation \cite[Theorem 5.1]{KU11}. For a degree
$k-1$ polynomial over $\mathbb Z_p$, the evaluation time is $(\log
k)^{O(1)}(\log p)^{1+o(1)}$. This avoids the linear dependence on $k$,
but the factor $\log p\geq \log u$ is prohibitive for our purposes.

\paragraph{Simple tabulation hashing}
In simple tabulation hashing, we generally view both keys and hash values
as bit strings, so $u$ and $r$ are powers of two. Moreover, we view
a key $x$ as a vector of $c$ characters $x_0, \dots, x_{c-1}$ from the
alphabet $\Ain=[u^{1/c}]$. {\em Simple tabulation\/} is defined
in terms of $c$ {\em character tables\/} $h_0,\ldots,h_{c-1}:\Ain\fct R$.
 This induces a function $h:U\fct R$
defined by 
\begin{equation}\label{eq:simple-table}
h(x) = \bigoplus_{i\in [c]} h_i(x_i)=h_0(x_0) \oplus \dots \oplus h_{c-1}(x_{c-1}).
\end{equation}
Here $\oplus$ denotes bit-wise exclusive-or (xor). We call this {\em simple tabulation hashing\/}  when the character tables
are filled with random hash values from $R$.
This is a well-known
scheme dating back at least to Zobrist in 1970
\cite{zobrist70hashing} who used it for game playing programs. 
Simple tabulation hashing is only
$3$-independent even if all character tables are fully random. 

In simple tabulation, the preprocessing phase
fills the $c$ character tables $h_i$. These may all be stored
consecutively as a single 2D array $[c]\times\Phi\fct R$ using
$c u^{1/c}$ space. If we already
have some randomly filled memory, then
a simple tabulation hash function is defined in constant time, 
simply by placing the offset of the array in the random 
memory. 

In the query phase, we find each $h_i(x_i)$ by a single lookup. We do
only $c$ lookups, and we only have a constant number of word
operations per lookup, so each hash value is computed in $O(c)$ time.
If $\Ain$ consists of 8-bit or 16-bit characters, then the character
tables fit in fast cache. For 32-bit or 64-bit
keys, simple tabulation is about 3 times faster than the 3-independent
hashing obtained as in \req{eq:polynomial} by a degree 2-polynomial
tuned for a Mersenne prime (see, e.g., experiments in
\cite{patrascu12charhash,thorup12kwise}). Also note that with 
simple tabulation, the cost of expanding the range $R$ to longer bit-strings is minor in that we still only have to
do $c$ lookups. The added cost is only from storing and looking up longer
bit-strings that have to be xor'd.

In \cite{patrascu12charhash} it was proved for many concrete applications
that simple tabulation has far more power than its 3-independence suggests.
However, to use simple tabulation in an application such as 
linear probing, one has
to make a careful analysis to show that the dependence is not harmful
to the application. This is not as attractive as the generic independence
paradigm where any $k$-independent hash function can be used in any application for which $k$-independence suffices. According to Google Scholar, Siegel's
\cite{siegel04hash} highly independent hashing has more than 150 citations 
(including those to the original conference version), but as he states,
it is  ``far too slow for any practical application''. 

\subsection{Results} In this paper we show that to get the same 
high independence as Siegel \cite{siegel04hash} efficiently, we just have to apply
simple tabulation twice, and we get even higher independence with more
applications. Our key is to show that simple tabulation,
applied once, is likely to have some strong expander properties.

\paragraph{Unbalanced expanders by simple tabulation}
To describe the result, we need some simple notation and terminology. Suppose $y\in \Aout^d$
is a vector of $d$ characters from $\Aout$.  We let $y_j$ denote
character $j$ in $y$, so $y=(y_0,\ldots,y_{d-1})$.
By a {\em position character\/} we mean a pair $(j,a)\in [d]\times \Aout$
consisting of a position and a character. The vector $y$ is
identified with the corresponding set $\{(j,y_j) | j\in [d]\}$ of position
characters.

Consider a function $f:U\fct \Aout^d$. It defines an unbalanced
bipartite graph with the key set $U$ on the left-hand side and the
output position characters from $V=[d]\times \Aout$ on the right-hand side. A
key $x\in U$ has $d$ distinct neighbors; namely the $d$ output
position characters $(0,f(x)_0),\ldots (d-1,f(x)_{d-1})\in V$. Two keys $x$
and $y$ share a neighboring output position character if and only if
$f(x)_j=f(y)_j$ for some $j$. We say a set $X\subseteq U$ has a {\em unique output
  position character $(j,a)$\/} if there is an $x\in X$ such that
$f(x)_j=a$ and for all other $y\in X\setminus \{x\}$, $f(y)_j\neq
a$. Our basic result is that 
if we consider random simple tabulation 
with 6 times more output than input characters, then every
not too large set $X$ has a unique output position character. This can
be viewed as a weak expander property. As the number of output
characters increases, we get the standard expansion property that
$X$ has $\Omega(d|X|)$ distinct output position
characters (neighbors in the bipartite graph). The formal statement is as follows.
\begin{theorem}\label{thm:main}
Consider a simple tabulation function $h:\Ain^c\fct \Aout^d$ where
$d\geq 6c$ and where the character tables are fully random. Assume
$c=|\Ain|^{o(1)}$ and $(c+d)^c=|\Aout|^{o(1)}$.  Let
$k=|\Aout|^{1/(5c)}$. With probability
$1-o(|\Ain|^2/|\Aout|^{d/(2c)})$,
\begin{itemize}
\item[(a)] every key set $X\subseteq \Ain^c$ of size 
$|X|\leq k$ has at least one unique output position character.
\end{itemize}
Moreover, for any $\eps\in(0,1)$, with
probability $1-o(|\Ain|^2/|\Aout|^{\eps d/(2c)})$,
\begin{itemize}
\item[(b)] every key set $X\subseteq \Ain^c$ of size 
$|X|\leq k$ has more than $(1-\eps)d|X|$ distinct output position characters.
\end{itemize}
The requirement that the character tables are fully random can be relaxed
in the sense that we for (a) and (b) can use any $k\leq |\Aout|^{1/(5c)}$ such
that all character tables are $k$-independent, and independent of each other.
\end{theorem}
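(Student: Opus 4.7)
The plan is a union bound over subsets $X \subseteq \Ain^c$ of size $s \in \{2, \ldots, k\}$, with $\Pr[X\text{ bad}]$ bounded per $X$ using the peeling structure of simple tabulation. First, since XOR is bitwise and each character table $h_i : \Ain \to \Aout^d$ splits as $(h_i^{(0)}, \ldots, h_i^{(d-1)})$ with the $h_i^{(j)}$ independent and uniform in $\Aout$, the full hash decomposes into $d$ mutually independent simple tabulations $h^{(j)} : \Ain^c \to \Aout$. Hence events concerning different output positions $j$ are independent, a fact I use throughout.

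For a fixed $X$ of size $s$, call $X$ peelable if there is an ordering $x_1, \ldots, x_s$ of $X$ such that every $x_t$ (for $t \geq 2$) has some input position character $(i, x_{t,i})$ not appearing in any earlier $x_{t'}$. When $X$ is peelable, conditioning on all character table entries other than $h_{i^*}(x_{s, i^*})$ (where $i^*$ is the fresh coordinate of $x_s$) makes $h(x_{t'})$ deterministic for $t' < s$, while $h(x_s) = h_{i^*}(x_{s, i^*}) \oplus (\text{rest})$ is uniform in $\Aout^d$. Thus for part (a), the conditional probability that $x_s$ is ``drowned'' -- that is, $h(x_s)_j \in \{h(x_t)_j : t < s\}$ for every $j$ -- is at most $((s-1)/|\Aout|)^d$; and ``$X$ has no unique output position character'' implies $x_s$ is drowned, so $\Pr[X\text{ bad}] \leq ((s-1)/|\Aout|)^d$. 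Summing over peelable sets of size $s$ (at most $|\Ain|^{cs}$ of them) and over $s \leq k$ with $d \geq 6c$ then yields a contribution of the stated form.

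Non-peelable $X$ must contain an obstruction: a subset $Y \subseteq X$ in which every key has all of its input position characters shared with other keys of $Y$. Such $Y$ has size at least $4$ and a rank-deficit in its $\mathbb F_2$ incidence matrix, which worsens the per-set probability bound by a factor of $|\Aout|^{\mathrm{deficit} \cdot d}$. This is offset by a P\u{a}tra\c{s}cu--Thorup-style obstruction counting: an obstruction of size $s$ spans at most $\lfloor cs/2 \rfloor$ distinct input position characters (each appearing at least twice), so obstructed sets are much sparser than $|\Ain|^{cs}$. Balancing these two effects, the obstructed contribution also fits inside the target failure probability, using $(c+d)^c = |\Aout|^{o(1)}$ and $c = |\Ain|^{o(1)}$ to absorb polynomial factors in $c, s, d$. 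Part (b) runs the same peeling but replaces ``$x_s$ drowned at all $d$ positions'' with ``$x_s$ collides with past at $\geq \epsilon d$ positions''; the latter is controlled by a Chernoff bound over the $d$ independent output positions, since conditionally on the past the number of colliding positions for $x_s$ is a sum of $d$ independent Bernoulli variables each with parameter at most $(s-1)/|\Aout|$.

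The main obstacle will be the arithmetic in the obstructed case: balancing the rank-deficit probability penalty against the sparser obstruction count while keeping the combined bound inside $o(|\Ain|^2/|\Aout|^{d/(2c)})$. The relaxation to $k$-wise independent character tables is then automatic, since the peeling analysis inspects at most $s \leq k$ position characters per table, whose joint distribution under $k$-independence is identical to the fully random case.
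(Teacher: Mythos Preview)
Your union bound over subsets $X$ of size $s\le k$ cannot close. In the peelable case you obtain $\Pr[X\text{ bad}]\le((s-1)/|\Aout|)^d$ from a single peel and then sum over the at most $|\Ain|^{cs}$ sets of each size $s$. But $s$ ranges up to $k=|\Aout|^{1/(5c)}$, so the $s=k$ term is of order $|\Ain|^{ck}\,(k/|\Aout|)^d$; here $|\Ain|^{ck}=|\Ain|^{\,c\,|\Aout|^{1/(5c)}}$ is super-polynomial in $|\Aout|$, while $(k/|\Aout|)^d$ is only polynomially small, so the sum diverges. (Even with $|\Ain|=|\Aout|$ and $d=6c$, the exponent of $|\Aout|$ in that term is $ck-d\bigl(1-\tfrac{1}{5c}\bigr)$, and $ck=c\,|\Aout|^{1/(5c)}\to\infty$ while $d\bigl(1-\tfrac{1}{5c}\bigr)<6c$ stays bounded.) Part~(b) has the same defect, since your Chernoff bound on $x_s$ is exponential in $d$, not in $s$. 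Iterating the peel does not rescue this: the event ``$X$ has no unique output position character'' does not descend to $X\setminus\{x_s\}$, because some earlier $x_t$ may have an output position character that is unique within $\{x_1,\ldots,x_{s-1}\}$ yet collides only with $h(x_s)$ in the full set. The events therefore do not chain, and you cannot multiply per-step bounds to manufacture decay exponential in $s$. The obstructed case inherits the same problem, since your ``rank-deficit'' gain is again only polynomial in $|\Aout|$.

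The paper never union-bounds over $X$. It shows (Theorem~\ref{thm:key-main}) that any violating $X$ yields a list $L$ of $\ell$ input position characters on which at least $q\ell$ output characters are \emph{derivable} via XOR equations $h(x\symmdiff y)_j=0$ with $x\symmdiff y\subseteq L$, where $q=\eps d/(2c)$. A compression argument over codes $(L,\text{equations},\text{free values})$ then bounds the probability by roughly $\bigl(c|\Ain|\cdot(\ell^{2c}d/|\Aout|)^{q}\bigr)^{\ell}$. Since $\ell\le ck\le c|\Aout|^{1/(5c)}$ forces $\ell^{2c}\le|\Aout|^{2/5+o(1)}$, each factor is $\ll 1$ and the geometric sum over $\ell\ge 2$ converges to $o(|\Ain|^2/|\Aout|^{q})$. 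The structural step making this possible is Lemma~\ref{lem:DCA-ind}, which controls how many new derivable characters appear as each input position character is appended to $L$; this is precisely what makes the probability decay per \emph{input position character in $L$} rather than only per \emph{output coordinate} $j\in[d]$, and that is the leverage your approach is missing.
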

Above we think of $c$ and $d$ as slow growing. Our construction
is interesting also when $d$ is constant, but then we cannot
measure its effect with $O$-notation. 

The assumptions $c=|\Ain|^{o(1)}$ and $(c+d)^c=|\Aout|^{o(1)}$ are
not essential, but serve
to give simple probability bounds for (a) and (b). As we shall see in
Theorem\tref{thm:practice}, we can derive much better bounds for
concrete cases.

Our work is orthogonal to the deep work on explicit expanders; for
Theorem\tref{thm:main} relies on random values for the character
tables.  Also, when it comes to highly unbalanced expanders like in
Theorem~\ref{thm:main}, the best explicit constructions \cite{GUV09}
have logarithmic degrees. It would be very interesting if we could
fill the character tables of Theorem~\ref{thm:main} 
explicitly during preprocessing with an
efficient deterministic algorithm. When done, we would enjoy the high
speed of simple tabulation.

\drop{Note how the error probability drops exponentially
with larger $d$. Informally, for $|\Phi|\leq |\Psi|$ and
$\eps=\Omega(1)$, we will say that (a) and (b) are satisfied {\em with
  high probability\/} with large enough $d=O(c)$.}

\paragraph{High independence by double tabulation}
In this paper, we are mostly interested in the unique output position
characters from (a). We say that a function $f:U\fct \A^d$ is {\em $k$-unique,} or has
{\em uniqueness $k$,} 
if every subset $X\subseteq U$ of size at most $k$ has a 
unique output position character. Translating Lemma 2.6 in \cite{siegel04hash},
we get
\begin{lemma}[Siegel]\label{lem:siegel} Let $f:U\fct \Aout^d$ be a $k$-unique function.
Consider a random simple tabulation function $r:\Aout^d\fct R$ 
where the character tables $r_j:\Aout\fct R$, $j\in[d]$, are 
independent of each other, and where each $r_j$ is $k$-independent. Then $r\circ f:U\rightarrow R$ is $k$-independent. 
\end{lemma}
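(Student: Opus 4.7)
}
I would prove the lemma by converting the $k$-independence requirement into a rank condition on a $0/1$ incidence matrix, and then reading off that rank condition from the $k$-uniqueness of $f$. Fix $k$ distinct keys $x_1,\ldots,x_k\in U$ and consider the set of position characters that actually get looked up, namely
\[
S \;=\; \bigcup_{i\in[k]}\bigl\{(j,f(x_i)_j) : j\in[d]\bigr\}\;\subseteq\;[d]\times\Aout.
\]
For every fixed $j$, at most $k$ distinct characters appear among $f(x_1)_j,\ldots,f(x_k)_j$, so the $k$-independence of $r_j$, together with the independence across different $j$'s, implies that the collection of random values $\{r_j(a) : (j,a)\in S\}$ is mutually independent and each entry is uniform on $R$.

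The next step is to write the composed hash values as $\text{GF}(2)$-linear combinations of these free random bits. Let $M\in\{0,1\}^{k\times|S|}$ be the incidence matrix with $M_{i,(j,a)}=1$ iff $f(x_i)_j=a$. Then
\[
r\circ f(x_i) \;=\; \bigoplus_{(j,a)\in S} M_{i,(j,a)}\cdot r_j(a),
\]
so the random vector $(r\circ f(x_1),\ldots,r\circ f(x_k))$ is the image of the uniform random vector $(r_j(a))_{(j,a)\in S}\in R^{|S|}$ under a $\text{GF}(2)$-linear map with matrix $M$ (viewing $R=\{0,1\}^\ell$ coordinate-wise). It is a standard fact that the image of a uniform vector over $R^{|S|}$ under such a map is uniform on $R^k$ if and only if the rows of $M$ are linearly independent over $\text{GF}(2)$.

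The heart of the proof, and the only place the hypothesis on $f$ is used, is showing that $k$-uniqueness implies the rows of $M$ are linearly independent over $\text{GF}(2)$. Suppose for contradiction that some nonempty $I\subseteq[k]$ satisfies $\bigoplus_{i\in I}\text{row}_i(M)=0$; equivalently, for every position character $(j,a)$, the set $X_I=\{x_i:i\in I\}$ contains an even number of keys mapped to $a$ in coordinate $j$. But $|X_I|\leq k$, so by $k$-uniqueness applied to $X_I$ there exists a position character $(j,a)$ attained by exactly one key in $X_I$, which is an odd number---contradiction. Hence $M$ has full row rank $k$, and the argument above gives uniform joint distribution on $R^k$, i.e., $k$-independence of $r\circ f$. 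I do not anticipate any serious obstacle here; the only thing to be slightly careful about is keeping the $k$-independence of each individual $r_j$ separate from the mutual independence across the $d$ tables, since both are needed to conclude that the $|S|$ relevant lookup values are jointly uniform.
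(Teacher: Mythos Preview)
Your proof is correct. The paper's own argument (Appendix~\ref{app:siegel}) proceeds by \emph{peeling}: given $X$ with $|X|\le k$, pick $x\in X$ with a unique output position character $(j,f(x)_j)$, condition on all other relevant table entries so that the hashes of $X\setminus\{x\}$ are fixed, observe that the remaining fresh uniform value $r_j(f(x)_j)$ makes $r(f(x))$ uniform and independent of those fixed hashes, and recurse on $X\setminus\{x\}$. Your linear-algebraic route is the global form of the same idea---the peeling step is exactly Gaussian elimination using the unique column as a pivot---but it has a pleasant side effect: your contradiction argument only uses that every nonempty $X_I$ has some output position character occurring an \emph{odd} number of times, not necessarily a unique one. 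The paper reaches this weaker ``$k$-odd'' sufficient condition only later, in Section~\ref{sec:recurse}, by appealing to \cite{thorup12kwise,KW12}; your rank formulation delivers it immediately.
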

For completeness, we include the proof of Lemma \ref{lem:siegel} 
in Appendix \ref{app:siegel}.

Suppose we have a concrete simple tabulation function $h:\Ain^c\fct
\Aout^d$ that satisfies (a) from Theorem
\ref{thm:main}. Then $h$ is $k$-unique. 
We can now compose $h$
with a random simple tabulation function $r:\Aout^d\fct R$ from Lemma
\ref{lem:siegel}. The resulting function $r\circ h$ is a $k$-independent function from $U=\Ain^c$ to $R$. We call this composition {\em double
  tabulation}.

Note that if we want a new independent $k$-independent 
hash function, we can still use the same $k$-unique $h$ as a universal constant. We only need to generate a new independent 
simple tabulation hash function
$r':\Aout^d\fct R$, and use $r'\circ h:U\rightarrow R$ as the new $k$-independent
hash function.

Unfortunately, we do not know of any efficient way of testing if the
simple tabulation function $h$ from Theorem\tref{thm:main} is
$k$-unique. However, a random $h$ is $k$-unique with
some good probability. To emphasize that we only need
$k$-uniqueness for a single universal $h:\Ain^c\fct
\Aout^d$, we say that it happens with
{\em universal probability}. 

\begin{corollary}\label{cor:main}
Let $u=|U|$ and assume $c^{\,c^2}=u^{o(1)}$.
With universal probability $1-o(1/u^{1/c})$, using space $o(u^{1/c})$, 
we get $u^{\Omega(1/c^2)}$-independent hashing from $U$ to $R$ in $O(c)$ time.
\end{corollary}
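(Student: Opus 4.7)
The plan is to instantiate Theorem~\ref{thm:main}(a) so as to produce a $k$-unique random ``outer'' simple tabulation $h$, and then to compose it with an independent random ``inner'' simple tabulation via Lemma~\ref{lem:siegel}. Concretely, I would set $|\Aout|=|\Ain|=u^{1/c}$ and $d=6c$, and draw $h:\Ain^c\to\Aout^d$ as a fully random simple tabulation.

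The first step is to verify that the hypotheses of Theorem~\ref{thm:main} hold. The assumption $c^{c^2}=u^{o(1)}$ is equivalent to $c\log c=o((\log u)/c)=o(\log|\Ain|)$, which immediately yields both $c=|\Ain|^{o(1)}$ and $(c+d)^c=(7c)^c=|\Aout|^{o(1)}$ (using $|\Aout|=|\Ain|$). The theorem then gives that $h$ is $k$-unique for $k=|\Aout|^{1/(5c)}=u^{1/(5c^2)}=u^{\Omega(1/c^2)}$, with failure probability $o(|\Ain|^2/|\Aout|^{d/(2c)})=o(u^{2/c}/u^{3/c})=o(1/u^{1/c})$, matching the universal-probability claim.

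Conditioning on such a $k$-unique $h$ (now treated as a fixed ``universal constant''), I would draw an independent fully random simple tabulation $r:\Aout^d\to R$. Each $r_j$ is trivially $k$-independent and the $r_j$ are mutually independent, so Lemma~\ref{lem:siegel} applies and gives that $r\circ h:U\to R$ is $k$-independent; this is the $u^{\Omega(1/c^2)}$-independent hash function claimed by the corollary. The query time is $O(c)$ because each of $h$ and $r$ is evaluated by $c$ (respectively $d=O(c)$) character lookups and bitwise xors, and the total space is $O(c|\Ain|)+O(d|\Aout|)=O(cu^{1/c})=u^{1/c+o(1)}$ words under the assumption on $c$.

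There is no real obstacle; the only care required is the parameter bookkeeping. Specifically, one must check that the single hypothesis $c^{c^2}=u^{o(1)}$ is strong enough to imply both little-$o$ conditions in Theorem~\ref{thm:main}, and that the balanced choice $|\Aout|=|\Ain|$ simultaneously delivers the desired failure probability $o(1/u^{1/c})$ and independence $u^{\Omega(1/c^2)}$. Once that is confirmed, the corollary is a direct composition of Theorem~\ref{thm:main}(a) with Lemma~\ref{lem:siegel}.
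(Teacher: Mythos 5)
Your overall route---instantiate Theorem~\ref{thm:main}(a) to get a $k$-unique outer simple tabulation and compose with an inner random simple tabulation via Lemma~\ref{lem:siegel}---is exactly the paper's double-tabulation argument, and your verification of the hypotheses, the failure probability $o(1/u^{1/c})$, the independence $u^{\Omega(1/c^2)}$, and the $O(c)$ evaluation time are all fine for your parameter choice. The gap is the space bound. With $|\Ain|=|\Aout|=u^{1/c}$ and $d=6c$, the two tabulation functions use $\Theta\bigl((c+d)\,u^{1/c}\bigr)=\Theta\bigl(c\,u^{1/c}\bigr)$ words, which is $\omega(u^{1/c})$, not the $o(u^{1/c})$ claimed in the corollary; your own accounting ``$O(cu^{1/c})=u^{1/c+o(1)}$'' concedes this, so as written you prove a strictly weaker statement than Corollary~\ref{cor:main}.

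The paper avoids this by splitting keys more finely: it takes $c'=2^{\ceil{\lg_2 c}+1}=\Theta(c)$ input characters over the smaller alphabet $\Ain=[u^{1/c'}]$ (so $u^{1/c'}\leq u^{1/(2c)}$) and compensates with $d=8c'$ output characters, i.e.\ $d/(2c')=4$ rather than your $3$. Then the space is $O\bigl((c'+d)\,u^{1/c'}\bigr)=O\bigl(c\,u^{1/(2c)}\bigr)=o(u^{1/c})$, the failure probability becomes $o\bigl(|\Ain|^{2-4}\bigr)=o\bigl(u^{-2/c'}\bigr)=o(1/u^{1/c})$, the uniqueness is $k=u^{1/(5c'^2)}=u^{\Omega(1/c^2)}$, and the time stays $O(c'+d)=O(c)$. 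Note that simply shrinking the alphabet while keeping $d=6c'$ would not do, since then the failure bound degrades to $o(u^{-1/c'})=o(u^{-1/(2c)})$, which is weaker than claimed; this is why the ratio $d/c'$ must be increased along with the finer character split. So your proof needs this reparameterization (finer characters plus larger $d/c'$) to actually deliver the stated $o(u^{1/c})$ space together with the stated error probability.
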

\begin{proof}
We use the above double tabulation. 
For simplicity, we assume that $u$ is a power of a power of two. 
For the first simple tabulation function $h$
from Theorem\tref{thm:main}, 
we use $c'=2^{\ceil{\lg_2 c}+1}$ input
characters from $\Ain$ and $d=8c'$ output characters, also from $\Ain$.
The uniqueness we get is
$k=|\Ain|^{1/(5c')}=|\Ain|^{\Omega(1/c)}$, and the error probability
is $o((1/u^{1/c'})^{2-d/(2c')})=o(1/u^{1/c})$.
The second simple tabulation
$r$ from Lemma \ref{lem:siegel} has $d$ input characters
from $\Ain$, so the total number of tables is $c'+d=O(c)$.
This is also the number of lookups, and for each lookup, we do a constant number of operations on a constant number of words. 
The space is thus $O(c u^{1/c'})=o(u^{1/c})$, and the evaluation time is $O(c)$.
\end{proof}
Siegel \cite{siegel04hash} has proved that with
space $u^{1/c}$ one needs evaluation time $\Omega(c)$ to get independence
above $c$. The time bound in Corollary \ref{cor:main} is thus optimal 
for any higher independence. We note that the restriction $c^{\,c^2}=u^{o(1)}$
is equivalent to saying that the independence $u^{\Omega(1/c^2)}$ is
more than polynomial in $c$.

%\subsection{Other results}

\paragraph{Higher independence by recursive tabulation}
With representation space $u^{1/c}$, the highest independence we can
hope for is $u^{1/c}$. In Corollary \ref{cor:main}
we only get independence $u^{\Omega(1/c^2)}$. 
We will show that
we can get independence  $u^{\Omega(1/c)}$ using recursive tabulation.
This is where it is important that Theorem\tref{thm:main} allows
different alphabets for input and output characters. The basic
idea is to use output characters from $\Aout=[u^{1/2}]$, and recurse
on them to prove:
\begin{theorem}\label{thm:recurse}
Let $u=|U|$ and assume $c^{\,c^2}=u^{o(1)}$.
With universal probability $1-o(1/u^{1/c})$, using space $o(u^{1/c})$, we can get
$u^{\Omega(1/c)}$-independent hashing from $U$ to $R$ in $o(c^{\lg_2 c})$ time.
\end{theorem}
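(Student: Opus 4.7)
My plan is to iterate the double tabulation construction of Corollary~\ref{cor:main} recursively, shrinking the universe exponent geometrically. Unlike the Corollary, each level uses an output alphabet equal to the square root of the current input universe, so Theorem~\ref{thm:main}(a) delivers $k$-uniqueness with $k = |\Aout|^{1/(5c)} = u^{\Omega(1/c)}$ per level --- a full factor of $c$ better in the exponent than the Corollary --- at the price of recursively having to build $k$-independent hash functions on successively smaller universes.

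Concretely, fix $c' = 2^{\lceil\lg(2c)\rceil} = \Theta(c)$ (a power of $2$) and set $L = \lg c'$. At level $i\in\{0,\ldots,L-1\}$, take $u_i = u^{1/2^i}$, $c_i = c'/2^i$, $d_i = \gamma c_i$ for a large constant $\gamma$ chosen at the end, $\Ain_i = [u^{1/c'}]$ and $\Aout_i = [u^{1/2^{i+1}}]$. Since $|\Ain_i|^{c_i} = u_i$, level-$i$ keys decompose into $c_i$ characters of the \emph{fixed} alphabet $\Ain_i$. By Theorem~\ref{thm:main}(a), draw a $k$-unique simple tabulation $h^{(i)}:\Ain_i^{c_i}\to\Aout_i^{d_i}$ with $k=|\Aout_i|^{1/(5c_i)}=u^{1/(10c')}=u^{\Omega(1/c)}$. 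Define the level-$i$ hash $H_i := r^{(i)}\circ h^{(i)}$ via Lemma~\ref{lem:siegel}, where $r^{(i)}$ is a simple tabulation on $\Aout_i^{d_i}$ whose $d_i$ character tables are $d_i$ mutually independent copies of the level-$(i+1)$ hash $H_{i+1}$. At level $L$ the universe $u_L\leq u^{1/c'}$ is small enough to be completely tabulated, so $H_L$ is a truly random table lookup in $O(1)$ time.

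For the accounting, the key observation is that each $h^{(i)}$ is deterministic once drawn and may be \emph{shared} across all level-$i$ instances; only the $r^{(i)}$'s --- and hence only the base-level tables --- need be physically replicated. Summing the shared $h^{(i)}$'s gives $\sum_i c_i|\Ain_i|d_i = O(c^2 u^{1/c'})$ words, while the replicated base contributes $N_L\cdot O(u_L)$ words, where $N_L = \prod_{i<L}d_i = \gamma^L (c')^L/2^{L(L-1)/2} = c^{(\lg c)/2 + O(1)}$. The hypothesis $c^{c^2} = u^{o(1)}$ gives $\lg u = \omega(c^2\lg c)$, which is more than enough to force both space terms to be $o(u^{1/c})$. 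The query-time recurrence $T_i = O(c_i d_i) + d_i T_{i+1}$ with $T_L = O(1)$ telescopes to $T_0 = O(N_L) = c^{(\lg c)/2 + O(1)} = o(c^{\lg c})$. For the universal failure probability, Theorem~\ref{thm:main}(a) fails at level $i$ with probability $o(u^{2/c' - \gamma/2^{i+2}})$; taking $\gamma$ a sufficiently large constant makes every term $o(u^{-(1+\Omega(1))/c})$, and a union bound over the $L = u^{o(1)}$ levels still yields the advertised $o(1/u^{1/c})$.

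The main obstacle is the branching: each level-$i$ query spawns $d_i$ recursive queries, so naively the number of base-level lookups (and of base-table copies) would grow as $\prod_i d_i$, which could be as large as $c^{\lg c}$. The geometric decay $d_i = \gamma c/2^i$ is exactly what brings this product down to $c^{O(\lg c)/2}$, saving a $\sqrt{c^{\lg c}}$ factor and landing us simultaneously in $o(c^{\lg c})$ time and $o(u^{1/c})$ space. Verifying that the hypothesis $c^{c^2} = u^{o(1)}$ is strong enough to absorb the remaining polynomial-in-$c$ factors (both in the space count and in the per-level preconditions $c=|\Ain|^{o(1)}$, $(c+d)^c = |\Aout|^{o(1)}$ of Theorem~\ref{thm:main}) is then the only nontrivial inequality.
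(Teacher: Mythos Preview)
Your proposal is correct and follows essentially the same approach as the paper: recurse with geometrically shrinking universes $u_i=u^{1/2^i}$, fixed input alphabet $[u^{1/c'}]$, output alphabet equal to the next-level universe, $d_i=\Theta(c_i)$, share the $k$-unique $h^{(i)}$ across all level-$i$ instances while keeping the base-level random tables independent, and bound space, time, and failure probability exactly as you do. The only cosmetic differences are that the paper fixes the constant $\gamma=12$ rather than leaving it generic, and accounts for the per-level evaluation cost as $O(c_i\cdot c)$ word operations (since the intermediate vectors occupy $O(c)$ words) rather than your $O(c_i d_i)$; both are dominated by the $\prod_i d_i = c^{(\lg c)/2+O(1)}$ base-level term, so the final bounds coincide.
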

If we unravel the recursion (to be presented in Section
\ref{sec:recurse}), for some $D=o(c^{\lg c})$ and $k=u^{\Omega(1/c)}$,
we get a function $f:U\fct [u^{1/(2c)}]^D$ that is not $k$-unique, yet
which yields $k$-independence if composed with a random simple
tabulation function $r:[u^{1/(2c)}]^D\fct R$.
If follows from \cite[Proposition 2]{thorup12kwise} or \cite[Theorem 3]{KW12}
that $f$ has the property that some output position character appears
an odd number of times.

\paragraph{Concrete parameters} Note that when dealing with $n$ keys, it
is fairly standard to use {\em universe reduction}, applying universal hashing into a domain of
size $n^{2+\eps}$, $\eps=\Omega(1)$, hoping for no collisions. Starting from this domain,
dividing into $c=3$ characters brings us down to space $O(n^{2/3+\eps})$ which
may be very acceptable. Thus it is often reasonable to think of $c$ as small.

Below we consider some concrete parameter choices yielding
100-independent hashing. This would have been
prohibitively slow with the polynomial hashing from \req{eq:polynomial}.
With reference to Lemma \ref{lem:siegel}, the challenge is
to find a 100-unique function. The probabilities 
are based on careful calculations yielding much better bounds than
those derived from the simple formula in Theorem~\ref{thm:main}~(a).
We do not make any assumptions like $c=|\Ain|^{o(1)}$ and $(c+d)^c=|\Aout|^{o(1)}$.
\begin{theorem}\label{thm:practice} We consider a simple tabulation
hash function $h:\Ain^c\fct \Aout^d$. Assuming that the 
character tables $h_i$ of $h$ are fully random, or
at least 100-independent, and independent of each other,
\begin{enumerate}
\item\label{32-2} For 32-bit keys with $\Ain=\Aout=[2^{16}]$, $c=2$, and $d=20$, 
the probability that $h$ is not 100-unique is bounded by 
$1.5\times 10^{-42}$. 
\item\label{64-3}  For 64-bit keys with $\Ain=\Aout=[2^{22}]$, $c=3$, and $d=24$, 
the probability that $h$ is not 100-unique is bounded by 
$1.4\times 10^{-49}$. 
\item\label{64-4} For 64-bit keys with $\Ain=[2^{16}]$, $\Aout=[2^{32}]$, $c=4$, and 
$d=14$, 
the probability that $h$ is not 100-unique is bounded by 
$9.0\times 10^{-36}$. The idea is to use triple tabulation, applying Case \ref{32-2} to each of
the 32-bit output characters. 
\end{enumerate}
\end{theorem}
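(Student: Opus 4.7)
The plan is to carry out the union-bound argument underlying Theorem~\ref{thm:main}(a) with explicit constants for each of the three concrete parameter regimes, rather than absorbing the constants into asymptotic $o(\cdot)$ notation.

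Fix any of the three settings. For a candidate bad set $X \subseteq \Ain^c$ of size $s$, $2 \leq s \leq 100$, the bad event is that at every output position $j \in [d]$, every element of $X$ shares its position-$j$ value with another element of $X$. Because the character tables governing distinct output positions are mutually independent, these per-position events are independent across $j$, so the probability that $X$ witnesses the bad event factors as $q(s)^d$, where $q(s)$ is the single-position probability. To bound $q(s)$, enumerate partitions $\pi$ of $X$ into parts of size at least $2$: for a partition with $t$ parts, the probability that simple tabulation at a single output position realizes $\pi$ is at most $C_\pi \cdot |\Aout|^{-(s-t)}$, where $C_\pi$ is a combinatorial factor tracking the occasional linear dependencies that simple tabulation's XOR structure produces between part-wise collisions. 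The assumed 100-independence of each character table makes this partition-wise bound rigorous throughout $s \leq 100$. Multiplying $q(s)^d$ by $\binom{|\Ain|^c}{s}$ and summing over $s = 2, \ldots, 100$ reduces each part to direct arithmetic with the given parameters.

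For Parts~\ref{32-2} and~\ref{64-3}, the hypothesis $d \geq 6c$ of Theorem~\ref{thm:main}(a) holds, and the resulting sums come out below $1.5\times 10^{-42}$ and $1.4\times 10^{-49}$ respectively. Part~\ref{64-4} is more delicate: here $d = 14 < 24 = 6c$, so the bound $k \leq |\Aout|^{1/(5c)}$ from Theorem~\ref{thm:main}(a) would only yield about $3$-uniqueness. Nevertheless, for the concrete parameters $(|\Ain|,|\Aout|,c,d) = (2^{16}, 2^{32}, 4, 14)$ the much larger output alphabet still forces the union-bound sum below $9.0\times 10^{-36}$ (the $s=2$ term alone, of order $|\Ain|^{2c}/|\Aout|^d = 2^{-321}$, already dominates, and all larger $s$ contribute much less). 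The remark about \emph{triple tabulation} describes how $h$ is used downstream: one composes $h$ with $14$ independent copies of the Case~\ref{32-2} function applied to the respective 32-bit output characters (shrinking the alphabet back to $[2^{16}]$), and finally composes with a random simple tabulation to obtain hash values, giving a three-layer simple-tabulation scheme whose character tables all fit comfortably in memory.

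The main obstacle is giving a tight enough bound on the single-position probability $q(s)$. Since simple tabulation is only $3$-wise independent at a single output position, one cannot treat collisions as independent across pairs of keys beyond $s=3$. The rescue is the linear-over-$\oplus$ structure: the event that a specific partition $\pi$ of $X$ is realized at one position requires certain XOR-sums of character-table lookups to vanish, and the 100-independence of each character table makes these vanishing events compose with probability at most $|\Aout|^{-(s-t)}$ up to the modest combinatorial factor $C_\pi$. Pinning this down carefully over all partitions, and performing the sum over $s$ with the concrete parameters, is the core calculation behind the three numerical bounds; everything else is bookkeeping.
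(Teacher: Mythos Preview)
Your approach is genuinely different from the paper's and, as written, has a real gap.

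The paper does \emph{not} union-bound over bad sets $X$ directly. Instead it proves (Theorem~\ref{thm:key-main}) that any $X$ with no unique output position character yields a list $L$ of input position characters together with at least $\tfrac{d}{2c}|L|$ ``derivable'' equations $h(A)_j=0$, and then runs an encoding/compression argument: the probability that a random $h$ matches any such $(L,M,H)$-code is bounded by counting codes. Sections~\ref{sec:care}--\ref{cd:key-codes} tighten the code count (unordered $L$, structured equation sets, and a second ``key-coding'' scheme), and the theorem is proved by evaluating the explicit sum~\eqref{eq:practice} by computer with $\eps=1$ for each parameter triple. The whole argument lives on the \emph{input} side (the list $L$), never on partitions of the output values.

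Your plan instead bounds $\Pr[X\text{ bad}]=q(X)^d$ by summing over partitions $\pi$ of $X$ into blocks of size $\ge 2$, writing $\Pr[\pi\text{ realized}]\le C_\pi|\Psi|^{-(s-t)}$. The trouble is the assertion that $C_\pi$ is a ``modest combinatorial factor''. It is not: $C_\pi$ absorbs rank deficiency in the XOR constraints and can be a full power of $|\Psi|$. Already for $c=2$ and the ``square'' $X=\{(0,0),(0,1),(1,0),(1,1)\}$ the pairing $\{(0,0),(1,1)\},\{(0,1),(1,0)\}$ has $s-t=2$ but the two collision constraints are \emph{identical}, so the block-constancy event has probability $\approx 1/|\Psi|$, i.e.\ $C_\pi\approx|\Psi|$. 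More generally, whenever $X$ uses few input position characters (small $|L|$), many partitions become rank-deficient simultaneously and $q(X)$ is far larger than your partition sum with ``modest'' constants would predict. Since $q$ depends on the combinatorial structure of $X$ and not just on $s$, the notation $q(s)$ already hides the real difficulty; to make your route rigorous you would have to stratify $X$ by the rank of its constraint system---which is exactly what the paper's list-$L$ encoding does for free.

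A smaller but telling inconsistency: for Case~\ref{64-4} you say the $s=2$ term ``dominates'' at $|\Phi|^{2c}/|\Psi|^d\approx 2^{-320}\approx 10^{-96}$, yet the claimed bound is $9.0\times 10^{-36}$. Since you are proposing a different calculation, you cannot simply assert that it reproduces the paper's numbers; those numbers come from computer evaluation of~\eqref{eq:practice}, and a correct version of your approach would produce different (possibly even smaller) figures, but only after the rank-deficiency bookkeeping is actually carried out.
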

Recall that we only need a single universal 100-unique function $h$ for
each set of parameters. Trusting some randomly filled memory 
to represent such a 100-unique function as in Theorem~\ref{thm:practice} is extremely safe.

\drop{
The actual speeds depends on the size and speed of the
available memory. The expensive part is not the $c$ lookups by the
above 100-unique $h$. The expensive part is the composition with
the simple tabulation hashing $r$ from Lemma \ref{lem:siegel}
where we have to do a lookup in each of the $d\geq c$ character tables $r_j$, $j\in d$. If the lookups are in cache, they are
still fast compared with multiplication, and modern computers have
megabytes of cache. Moreover, all $r$ is an array offset in some
randomly filled memory. 
We also note that tabulation hashing defined by offsets
to some given randomly filled memory The above tabulation hash functions can be shared efficiently in
multi-core systems. The point is that the randomly filled memory used
for the tables does not need to be changed. In a standard multi-core
protocols such as
MESI \footnote{\texttt{http://en.wikipedia.org/wiki/MESI\_protocol}},
the memory thus remains clean, and it is therefore shared freely
between caches. We could even imagine future computers with some randomly filled read-only memory. This should be both cheaper and faster than regular memory, which could be saved for other purposes.}

\subsection{Siegel's highly independent hashing}
Siegel's study on hashing~\cite{siegel04hash} considered the fundamental trade-offs between 
independence, representation space, and the time it takes to
compute the hash of a key.
\paragraph{Lower bound}
Siegel's lower bound
\cite[Theorem 3.1]{siegel04hash} is in Yao's \cite{yao81tables}
powerful cell probe model. To get clean bounds, he assumes that
the domain of a word or cell is no bigger that of a single hash value.
Trivially this means that we need at least $k$ cells to get independence $k$.

The representation is an arbitrary function of the random seed. If
the representation has $s$ cells, an equivalent formulation is that
the contents of the $s$ cells follow an arbitrary distribution.

The querier is given the key. To compute the hash value, he can probe
the cells of the representation. He is only charged for these cell probes. 
His next move is an arbitrary function of the key and the cells he has read so far: he can either
pick a cell based on this information, or output the hash value.
 
Siegel shows that if the representation uses $u^{1/c}$ cells, and the
query phase makes $t<c$ probes, then the hash function computed can be
at most $t$-independent. His argument is very robust, e.g., with no
change to the asymptotics, he can allow some quite substantial bias in
the independence, look at average query time, etc.

\paragraph{Upper bounds}
Siegel's framework for upper bounds is similar to what we already
described, but simpler and in that he is not ``position sensitive'': Given a function $f:U\fct \Aout^d$, he considers the
unbalanced bipartite graph with the keys from $U$ on the left-hand side, and 
output characters from $\Aout$ on the
right-hand side (on our right-hand side, we had the position output
characters from $V=[d]\times\Aout$).  A key $x\in U$ has the $d$ neighbors
$f(x)_0,\ldots,f(x)_{d-1}$ that may not all be distinct. He says that $f$ is
$k$-peelable (corresponding to $k$-unique) if every key set $X$ of
size at most $k$ has a unique output character. 
Here $x,y\in X$ share an output character if $f(x)_i=f(y)_j$ even
if $i\neq j$. He uses a {\em single\/}
character table $r_0:\Aout\fct R$, and defines
$r:\Aout^d\fct R$ by 
\begin{equation}\label{eq:single-table}
r(x)=\bigoplus_{j\in [d]}r_0(x_j).
\end{equation}
Siegel proves \cite[Lemma
  2.6]{siegel04hash} that if $f$ is $k$-peelable, and
$r_0:\Aout\fct R$ is random, then $r\circ f$ is $k$-independent. Note that the space of $r$ is independent
of $d$ since $r_0$ uses only a single character table taking space $|\Aout|$. 
It
does, however, take $d$ lookups to evaluate \req{eq:single-table}. The
problem is to find the $k$-peelable function $f$.

Let $u=|U|$ and $u^{1/c}=|\Aout|$. For the existence of a $k$-peelable function, Siegel
\cite[Lemma 2.9]{siegel04hash} argues
that a fully random $f:U\fct \Aout^d$ is likely to be a good expander 
from $U$ to $\Aout$ if $d\geq 6c$. More precisely, with probability
$1-\tO(1/u)$, for
$k=u^{1/(2c)}$, he gets that every set $X$ of size $|X|\leq k$ has
more than $d|X|/2$ neighbors. He also notes \cite[Lemma 2.8]{siegel04hash} 
that if $X$ has more than
$d|X|/2$ distinct neighbors, then some of them have to be unique, so $f$ is also $k$-peelable. 

Representing a fully random $f$ would take space $u$, but existence is
all that is needed for upper bounds in the abstract cell-probe model.
We can simply use the unique lexicographically smallest
$k$-peelable $F=\min\{f:U\fct\Aout^d\mid 
f\textnormal{ is $k$-peelable}\}$. The 
querier can identify $F$ on-the-fly without any probes.
The representation only needs to
include the random $r_0$ which takes $u^{1/c}$ space.  The hash
$r(F(x))$ of a key $x$ is computed with $d=O(c)$ probes to $r_0$, and
the independence is $k=u^{1/(2c)}$. The number of probes is within a
constant factor of the lower bound which says that with $u^{1/c}$, we need
at least $c$ probes for any independence above $c$.

To get an implementation on the word RAM \cite[\S
  2.2]{siegel04hash}, Siegel makes a graph product based
on a small random graph that can be stored in space $u^{1/c}$.
Assuming that the random graph has sufficient expander properties,
the product 
induces a $u^{\Omega(1/c^2)}$-peelable function
$f:U\fct\Aout^{O(c)^{\,c}}$.  
This leads to a
$u^{\Omega(1/c^2)}$-independent hash function represented in $u^{1/c}$
space. Hash values are computed in $O(c)^{\,c}$ time. It should be
noted that Siegel's focus was the case where $c=O(1)$, and then he does
get $u^{\Omega(1)}$-independence in $O(1)$ time, but here we consider
$c=\omega(1)$ in order to qualify the dependence on $c$.

The RAM implementation of Siegel should be compared with our bounds
from Theorem\tref{thm:recurse}: $u^{\Omega(1/c)}$-independent hashing
using $o(u^{1/c})$ space, computing hash values in $o(c^{\lg c})$
time. Our independence is significantly higher---essentially as high
as in his existential cell-probe construction---and we are almost
exponentially faster. We should also compare with Corollary
\ref{cor:main}: $u^{\Omega(1/c^2)}$-independent hashing using
$o(u^{1/c})$ space, computing hash values in $o(c)$ time. This is the same
independence as an Siegel's RAM implementation, but with the optimal
speed of his existential cell probe construction.

On the technical side, recall that Siegel's $k$-peelability is not
position sensitive. This is only a minor technical issue, but being
sensitive to positions does yield some extra structure. In particular,
we do not expect the simple tabulation function from Theorem\tref{thm:main} to
be $k$-peelable without the positions.

\subsection{Other related work}
Siegel states \cite[Abstract]{siegel04hash} about his scheme that it
is ``far too slow for any practical application''. This and the
$O(c)^{\,c}$ evaluation time has lead researchers to seek simpler and
faster schemes. Several works
\cite{dietzfel03tabhash,KW12,thorup12kwise} have been focused on the
case of smaller independence $k$. These works have all been
position sensitive like ours.  Fix $\Aout=[u^{1/c}]$. We are looking
at functions $f:U\fct \Aout^d$, to be composed with a simple tabulation
hash function $r:\Aout^d\fct R$. The evaluation time is $O(d)$, so we
want $d$ to be small.

Dietzfelbinger and Woelfel \cite[\S 5]{dietzfel03tabhash} pick
$d$ 2-independent hash functions $f_0,\ldots,f_{d-1}:U\fct\Aout$. This
yields a function $f:U\fct \Aout^d$ defined by $f(x)=(f_0(x),\ldots,f_{d-1}(x))$.
Composing $f$ with a random simple tabulation function $h:\Aout^d\fct R$,
they show that the result is close to $k$-independent if $d\gg kc$.

Thorup and Zhang \cite{thorup12kwise} found an explicit deterministic construction of
a $k$-unique $f$ which also has better constants than the scheme from 
\cite{dietzfel03tabhash}. By Lemma \ref{lem:siegel}, the resulting hash 
function is exactly $k$-independent. Simple tabulation is by
itself $3$-independent, but  \cite{thorup12kwise} is motivated
by applications needing 4 and 5-independence.
For $k=5$ and $\Aout=[u^{1/c}+1]$,
\cite{thorup12kwise}  gets down to $d=2c-1$. For general $k$, using
$\Aout=[u^{1/c}]$, \cite{thorup12kwise} gets $d=(k-1)(c-1)+1$.

Klassen and Woelfel \cite{KW12} focus mostly on $c=2$, where 
for arbitrary $k$ they get $d=(k+1)/2$. For general $c$, their bound
is $d=\ceil{2\frac{c-1}{2c-1}(k-1)}(c-1)+1$.

We note that the twisted tabulation in \cite{PT13:twist} has a similar flavor to the above
schemes, but it does not yield independence above 3. The main target
of \cite{PT13:twist} is to get strong Chernoff style bounds.

The above works \cite{dietzfel03tabhash,KW12,thorup12kwise} thus need
$d=\Omega(kc)$ for independence $k$. This contrasts our Theorem\tref{thm:main}
which gets $d=O(c)$ with independence $u^{\Omega(1/c)}$. 
Apart from the case $c=2$, $k=5$ from \cite{thorup12kwise}, our new
scheme is probably also the easiest to implement, as we are just applying
simple tabulation twice with different parameters.

There are also constructions aimed at providing good randomness for a
single unknown set $S$ of size $n$ \cite{dietzfel09splitting,PP08}.
In particular, Pagh and Pagh \cite{PP08} have a two-part randomized
construction of a constant time hash function $h$ that uses $O(n)$
space so that for any given set $S$ of size $n$, if Part 1 does not
fail on $S$, then Part 2 makes $h$ fully random on $S$. We have the
same two-parts pattern in our double tabulation where Part 1 generates
a random simple tabulation function that we hope to be $k$-unique on the
whole universe, and Part 2 composes this function with another random simple tabulation
function $r$. If Part~1 succeeds, the result is $k$-independent. A
principal difference is that any concrete fixing of Part 1 from
\cite{PP08} fails for many sets $S$, so the success probability of
Part 1 in \cite{PP08} is not universal; otherwise this would have been
an $n$-independent hash function. From a more practical perspective,
often we only need, say, $\log n$-independence, and then double
tabulation with universe reduction and 
small character tables in cache is much simpler and faster than  \cite{PP08}. In fact, \cite{PP08} uses Siegel's
\cite{siegel04hash} highly independent hash functions as a subroutine,
and now we can instead use our double tabulation. Double tabulation
fits very nicely with the other use of random tables in \cite{PP08}, making
the whole construction of full randomness for a given set $S$ quite simple. It should be noted that
\cite{dietzfel03tabhash} have found a way of bypassing the need of
\cite{siegel04hash} in \cite{PP08}.  However, our double tabulation is
even simpler, and it replaces the use of \cite{siegel04hash} in all
applications.

\section{The basic analysis}\label{sec:basic}
The next two sections are devoted to the proof of Theorem\tref{thm:main}. For now, we assume that all character tables are fully random, leaving the relaxation to $k$-independent character tables till the very end.

By an \emph{input position
  character} we mean a value from $[c]\times \Ain$. Notationally, we can 
then
view a key $x=(x_0,\ldots,x_{c-1})$ as the set of input position characters:
$\{(0,x_0),\ldots,(c-1,x_{c-1})\}$.
We can now specify $\hi $ as a single table from input position characters
$[c]\times\Ain$ to vectors $\hi (\alpha)\in \Aout^d$, that
is, if $(a,i)=\alpha\in[c]\times\Ain$, then $\hi(\alpha)=h_i[a]$.
This view induces a function $\hi $ on arbitrary sets $x$ of input position characters:
\begin{equation}\label{eq:set-def}
\hi (x) = \bigoplus_{\alpha \in x} \hi (\alpha).
\end{equation}
Note that when $x$ is the set corresponding to a key, \req{eq:set-def}
agrees with \req{eq:simple-table}. We define an {\em output index\/} as a pair 
$(\alpha,j)\in ([c]\times \Ain)\times [j]$ indexing the individual
output character $h(\alpha)_j$.

We want to show that the if we assign $h:[c]\times\Ain\fct \Aout^d$ at
random, then there is only a small probability that there exists
a set $X\subseteq \Ain^c$, $|X|\leq k\leq |\Aout|^{1/(5c)}$, violating (a) or (b) in Theorem\tref{thm:main}. 

\paragraph{Efficient coding}
To specify $\hi$, we have to specify a vector of  $d$ output 
characters from
$\Aout$ for each of the $c|\Ain|$ input position characters.  Based on a
violating set $X$, we will construct an efficient coding of some of the
output characters. The number of such efficient codings will be
much smaller than the number of ways we can assign the output characters
coded. Efficient codings are therefore rarely possible, hence so are
the violating sets.

Our coding will not describe the set $X$, and it is important that
decoding can be done without any knowledge of $X$, except that $|X|\leq k$.
The coding starts by specifying a list $L$ with some of the input position characters from the
keys in $X$. We will now go through the input position characters $\alpha\in L$ 
in the order that they appear in $L$. For each $\alpha$, we
will specify the $d$ output characters $\hi (\alpha)_j$, $j\in[d]$. 
Some of these output characters will be derivable from
previously specified output characters, leading to a more efficient encoding:

\begin{definition} We say the output character $h(\alpha)_j$ is \emph{derivable} if
 there exist keys $x,y\in X$ such that:
\begin{itemize*}
\item The symmetric difference $x\symmdiff y=\{(i,x_i),(i,y_i)\, |\, i\in [c], x_i\neq y_i\}$ of $x$ and $y$ is contained in $L$.
\item $\alpha$ is last in $L$ among the input position characters 
in $x\symmdiff y$.
\item $h(x)_j=h(y)_j$, or equivalently, $h(x\symmdiff y)_j = 0$.
\end{itemize*}
\end{definition}
In our representation, we do not need to know the keys $x$
and $y$. We only need to know the symmetric difference $A=x\symmdiff y\subseteq
L$. We call $(A,j)$ an {\em equation\/} as it
represents the information that $h(A)_j=0$. 
The output index specified by the equation $(A,j)$ is
the pair $(\alpha,j)$ where $\alpha$ is the last input position character from
$A$ in the list $L$. The equation derives the output character
\[h_j(\alpha)=\bigoplus\{h(\beta)_j\mid \beta\in A \setminus \{\alpha\}\}.\] 
The input position characters $\beta$ all
precede $\alpha$ in $L$, so the output characters $h(\beta)_j$ have
all been specified. We do not want more than one equation specifying the
same output index.

When the list $L$ of length $\ell$ is given,
the set $A$ can be picked in less than
$\ell^{2c}$ ways, so the number of possible
derivations is less than $\ell^{2c}d$.
If $\ell^{2c}d\ll |\Aout|$,
then this is a win.  Indeed this is the case because 
$\ell\leq kc\leq c|\Aout|^{1/(5c)}$ and $(c+d)^c=|\Aout|^{o(1)}$.
However, we will have to make a lot
of derivations to make up for the fact that we first have to specify
the $\ell$ input position characters in $L$. In Section \ref{sec:findingL}
we will show that a violating set $X$ implies the existence of
a list $L$ with many derivable output characters, e.g., a
violation of (a) in Theorem\tref{thm:main} will yield $|L|\,d/(2c)$ derivable
output characters.

Below, for a given parameter $q$, we study the probability $\Pq$ of
finding a list $L$ of length at most $kc$ with at least $q|L|$
derivable output characters.  Below we will prove that
\begin{equation}\label{eq:main}
\Pq=o(|\Ain|^2/|\Aout|^q). 
\end{equation}
There may be much more than $q|L|$ output characters derivable from $L$. However, in our encoding, we also only store equations for exactly
$\ceil {q|L|}$ of them.

\paragraph{Coding and decoding} To summarize, the exact components of
our code are:
\begin{enumerate}
\item A list $L$ of $\ell$ input position characters.
\item A set of $M$ of $\mq$ equations $(A,j)$ where $A\subseteq L$
and $j\in [d]$. Let $I$ be the set of
output indices specified in these equations. The output indices should
all be distinct, so $|I|=\mq$.
\item A reduced table $H$ that for each $(\alpha,j)\in (A\times
  [d])\setminus I$, specifies the output character $\hi (\alpha)_j\in
  \Aout$.
\end{enumerate}
Above, each component presumes that the previous components are known,
so $L$ is known when we specify $M$, and $L$, $M$, and hence $I$ is
known when we specify $H$. Together, this specifies $L$ and
$\hi |L$. The decoding of $\hi |L$ goes as follows.  From $L$ and $M$ we
compute the set $I$ of output indices $(\alpha,j)$ specified by $M$.
For all other output indices $(\alpha,j)\in L\times [d]$, we find the
output character $\hi (\alpha)_j$ in $H$. To get the
remaining output characters we run through the input position characters $\alpha\in L$
in the order they appear in $L$. For each
$\alpha$ and $j\in[d]$, we check if $(\alpha,j)\in I$. If so, we take the
corresponding equation $(A,j)\in M$, and set
$\hi (\alpha)_j=\hi_j (A\setminus\{\alpha\})$.

\paragraph{Bounding the probabilities}
Let the above coding be fixed, and consider a random simple tabulation
function $\hi $. The
probability that our coding matches $\hi (\alpha)_j$ for all output
indices $(\alpha,j)\in L\times [d]$ is exactly $1/|\Aout|^{\ell
  d}$. A union bound over all possible codes will imply that none of
them are likely to match a random $\hi $.

Let us first assume that $\ell$ is fixed, that is, we restrict
our attention to codes where $|L|=\ell$. The number of
choices for $L$ is bounded as $\choicesl(L)<(c|\Ain|)^\ell$. Let $\choicesql(M)$ 
be the number of choices for $M$ given $L$. We already saw that
the number of possible equations is bounded by $\ell^{2c}d$. The number of
ways we can pick $\mq$ of these is trivially bounded as
\begin{equation*}\label{eq:choiceR0}
\choicesql(M)< (\ell^{2c}d)^{\mq}.
\end{equation*}
Finally, we need to pick $H$ with an output character for
each output index in $(L\times [d])\setminus I$. There
are $\ell d-\mq$ output characters to pick, leaving us $|\Aout|^{\ell d-\mq}$
choices for $H$. All in all we have $\choicesl(L)\cdot\choicesql(M)\cdot
|\Aout|^{\ell d-\mq}$ possible codes with the given $\ell$. By
the union bound, the probability that any of them match a random $\hi $
is 
\begin{align}\label{eq:basic}
\Pql&=\frac{\choicesl(L)\cdot \choicesql(M)\cdot
|\Aout|^{\ell d-\mq}}{|\Aout|^{\ell d}}
=\frac{\choicesl(L)\cdot \choicesql(M)}{|\Aout|^{\mq}}\\
&<
(c|\Ain|)^\ell \left(\frac{\ell^{2c}d}{|\Aout|}\right)^\mq
\leq 
(c|\Ain|)^\ell \left(\frac{\ell^{2c}d}{|\Aout|}\right)^{q\ell}\label{eq:cheat}
\end{align}
Strictly speaking, the last inequality assumes 
$\frac{\ell^{2c}d}{|\Aout|}\leq 1$. However, if $\frac{\ell^{2c}d}{|\Aout|}> 1$,
the whole bound is above $1$, and hence a trivial upper bound
on $\Pql$. Since $\ell\leq ck\leq c|\Aout|^{1/(5c)}$,
we have
\[(c|\Ain|)^\ell \left(\frac{\ell^{2c}d}{|\Aout|}\right)^{q\ell}
\leq \left(|\Ain|/|\Aout|^{3q/5}c(c^{2c}d)^q\right)^\ell.\]
We will now use our assumptions $c=|\Ain|^{o(1)}$ and $(c+d)^c=|\Aout|^{o(1)}$.
We can also assume that $|\Ain|^2\leq |\Aout|^q$, for otherwise \req{eq:main}
is a trivial probability bound above 1. Hence $c=|\Ain|^{o(1)}=|\Aout|^{o(q)}$,
so $c(c^{2c}d)^q=|\Aout|^{o(q)}$. Hence
\[\Pql\leq 
\left(|\Ain|/|\Aout|^{(3/5-o(1))\,q}\right)^\ell.\]
However, we must have $\ell\geq 2$, for otherwise there cannot be any equations. Therefore
\[
\Pq\leq \sum_{\ell=2}^{ck} \Pql
=o(|\Ain|^2/|\Aout|^q).\]
This
completes the proof of \req{eq:main}.

Finally, as stated in Theorem\tref{thm:main}, we need to argue that we do not need the character tables of
$h$ to be fully random. For $k\leq|\Aout|^{1/(5c)}$, it should suffice
that the character tables are $k$-independents and independent of
each other. The simple point is that the violating set $X$ is of
size $|X|\leq k$, so it involves at most $k$ input characters for
each position, and $L$ can only use these input characters. With $k$-independent hashing, the assignment of output characters to the input characters in $L$
is completely random, so we
do not need any changes to the above analysis.

\section{Many derivable output characters.}\label{sec:findingL}
The goal of this section is to prove that if there is a set $X$ violating
(a) or (b) in Theorem  \ref{thm:main}, then we can construct a list $L$ with many
derivable characters. 
\begin{theorem}\label{thm:key-main} 
Consider a simple tabulation function $h:\Ain^c\fct \Aout^d$.
\begin{itemize}
\item[($\,\overline{a}$)] If there is a key set $X$ with no unique output position characters, then
there is a list $L$ with some of the input position characters from $X$ so 
that at least $\frac d{2c}|L|$
of the output characters from $L$ are derivable.
\item[($\,\overline{b}$)] If for some $\eps\leq 1$ there is a key set $X$ with at most
$(1-\eps)d|X|$ distinct output position characters, then
there is a list $L$ with some of the input position characters from $X$ so 
that at least $\frac{\eps d}{2c}|L|$
of the output characters from $L$ are derivable.
\end{itemize}
\end{theorem}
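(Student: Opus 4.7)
My plan is to order the keys of $X$ as $x_1,x_2,\ldots$ and build $L$ in one pass by appending, for each $t$, the set $N_t$ of input position characters of $x_t$ absent from $x_1,\ldots,x_{t-1}$. The key observation driving everything is that characters in $N_t$ are by definition missing from every earlier $x_s$, so $N_t \subseteq x_s \symmdiff x_t$ for every $s<t$; moreover, the chars in $N_t$ are strictly later in $L$ than any char of $x_1,\ldots,x_{t-1}$. Hence, whenever $N_t \neq \emptyset$, the last character $\omega_t$ of $N_t$ is \emph{simultaneously} the last element of $x_s \symmdiff x_t$ in $L$ for \emph{every} $s<t$, independent of which witness we take.

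\paragraph{Derivations and counting.}
For each $t \geq 2$ with $N_t \neq \emptyset$ and each $j\in[d]$ such that $h(x_t)_j = h(x_s)_j$ for some $s<t$, the equation $(x_s \symmdiff x_t, j)$ then witnesses that the output character indexed by $(\omega_t, j)$ is derivable. Because the $N_t$'s are pairwise disjoint and $j$ distinguishes derivations within a single $t$, all output indices produced this way are distinct. In the favourable case where the ordering can be chosen so that $N_t \neq \emptyset$ for every $t$ (i.e.\ Hall's condition holds on the bipartite graph with keys on one side and their characters on the other), summing the contribution of each collision class at each position yields at least
\[
\sum_{j\in[d]} (|X| - D_j) \;=\; d|X| - D
\]
derivations, where $D_j$ is the number of distinct values of $h(\cdot)_j$ on $X$. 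Under hypothesis ($\overline{a}$) no class is a singleton, so $D \leq d|X|/2$, giving $\geq d|X|/2$ distinct derivable output characters. Under hypothesis ($\overline{b}$) one has $D \leq (1-\eps)d|X|$, yielding $\geq \eps d|X|$. Combined with $|L| \leq c|X|$ (each key contributes at most $c$ new chars), the required ratios $d/(2c)$ and $\eps d/(2c)$ follow.

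\paragraph{Main obstacle.}
The technical heart is the case where Hall's condition fails: there is some $Y\subseteq X$ with $|\bigcup_{y\in Y} y| < |Y|$, so that in every ordering some keys necessarily have $N_t = \emptyset$. Those keys still contribute bad pairs $(x_t,j)$, but their candidate derivation indices now sit inside $L_{t-1}$ rather than in a fresh $N_t$, and may clash with the clean $\omega_t$-based indices produced by the useful keys. I would handle this either by restricting attention to a subset $X'\subseteq X$ on which Hall's condition holds while still preserving enough of the collision structure of $X$ --- using a matroid-exchange / alternating-path argument on the key-character bipartite graph --- or by exploiting the observation that heavy character overlap simultaneously forces $|L|$ to shrink proportionally to $|X^+| = \{t : N_t \neq \emptyset\}$, so the target ratio $\frac{d}{2c}|L|$ can already be met by $X^+$ alone. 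Everything outside this step is bookkeeping around the one clean geometric fact about $\omega_t$ stated in the first paragraph.
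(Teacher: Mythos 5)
Your ``favourable case'' is correct and nicely argued: when an ordering of the keys exists in which every key contributes a new input position character, the indices $(\omega_t,j)$ are distinct, and the count $\sum_j(|X|-D_j)=d|X|-D$ together with $D\le d|X|/2$ (resp.\ $D\le(1-\eps)d|X|$) and $|L|\le c|X|$ gives the claim. But that case is exactly the easy one, and the ``main obstacle'' you flag is the actual content of the theorem. Whenever $|X|$ exceeds the number of distinct input position characters---e.g.\ a combinatorial cube $X=A_0\times\cdots\times A_{c-1}$, which is precisely the kind of structured violating set one must worry about---\emph{no} ordering makes all $N_t$ nonempty, so your argument as written proves nothing there. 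Neither proposed repair works as stated. Restricting to a Hall subset $X'\subseteq X$ destroys the hypothesis: the collision partners of keys in $X'$ may be thrown away, so $X'$ can have many unique output position characters and the lower bound $d|X'|-D'\ge\eps d|X'|$ on collision mass simply fails. The second repair, $|L|\le c|X^+|$, reduces the task to producing roughly $\tfrac{\eps d}{2}|X^+|$ derivations anchored at keys of $X^+$, but your construction only yields a derivation at $(\omega_t,j)$ when $x_t\in X^+$ collides at position $j$ with an \emph{earlier} key, and nothing guarantees the collision mass lands there. Concretely, take $c=2$, $X=A\times B$ with $|A|=|B|=m$, and $h_0$ constant on $A$ in every output position while $h_1$ is injective in every position: $X$ has no unique output position characters, yet if you order the keys starting with a diagonal $(a_1,b_1),(a_2,b_2),\dots,(a_m,b_m)$, then $X^+$ is that diagonal and no diagonal key collides with an earlier one, so your count is $0$. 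Other orderings happen to work here, but you give no rule for choosing an ordering and no proof that a good one always exists---and that existence claim is essentially the whole theorem.

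This is where the paper's proof departs from yours: it orders the \emph{input position characters} (not the keys) by decreasing frequency $n_1\ge n_2\ge\cdots$, lets $L$ be a \emph{prefix} of that list rather than the full character set, and compares two potentials: $\cconf_\ell$, the number of output characters that become derivable when $\alpha_\ell$ is appended, and $\Dconf_{\le\ell}$, the collision mass realized inside the prefix. The key step (Lemma~\ref{lem:DCA-ind}) shows $\Dconf_{\le\ell}-\Dconf_{\le\ell-1}\le 2\cconf_\ell n_\ell$, via an argument that each new $\alpha_\ell$-pair has at most one newly counted witnessing pair; then, since $\Dconf_{\le\ell^*}\ge\eps d|X|$ and the $n_\ell$ are decreasing, a rearrangement/averaging contradiction shows that \emph{some} prefix $L_{\le\ell}$ has $\sum_{i\le\ell}\cconf_i\ge\tfrac{\eps d}{2c}\ell$. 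The frequency ordering and the freedom to stop at a prefix are what absorb the imbalance that defeats your key-by-key pass (and note the factor $2$ you would also need to confront, coming from the witnessing pairs). So the proposal has a genuine gap: the one clean fact about $\omega_t$ does not by itself survive the regime where keys outnumber characters, and bridging it requires an idea of this averaging type that the proposal does not supply.
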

\newcommand\na{($\overline{\textnormal a}$)}
\newcommand\nb{($\overline{\textnormal b}$)} 
\paragraph{Proof that Theorem\tref{thm:key-main} implies Theorem
\ref{thm:main}} Before proving Theorem\tref{thm:key-main}, we note
that it trivially implies Theorem
\ref{thm:main}, for if there is a set $X$ violating Theorem\tref{thm:main}~(a), then $X$ satisfies Theorem\tref{thm:key-main}~\na{}, so there is a list $L$ with $\frac d{2c}|L|$ derivable characters. By \req{eq:main} the probability of this event is $P^{d/(2c)}\leq
|\Ain|^2/|\Aout|^{d/(2c)}$. Likewise Theorem\tref{thm:main}~(b) follows from
Theorem\tref{thm:key-main}~\nb{}. \qed

\begin{proof}[ of Theorem\tref{thm:key-main}] We assume that we have a set $X$ satisfying the conditions
of \na{} or \nb. For a uniform proof, if the condition of \na{} is true, 
we set $\eps=1$, overruling
a possibly smaller $\eps$ from \nb{}. Set $q=\frac{\eps d}{2c}$. We will identify the list $L$ so that at least $q|L|$ of the output characters
from $L$ are derivable.

Let $\alpha_1,\ldots,\alpha_{\ell^*}$ be the distinct
input position characters from keys in $X$ listed in order of
decreasing frequency in $X$. Let $n_i$ be the number of keys
from $X$ containing $\alpha_i$. Then $n_1\geq n_2\geq\cdots \geq n_{\ell^*}$ and $\sum_{i=1}^{\ell^*}n_i=c|X|$.

Let $L_{\leq\ell}$ be the prefix $\alpha_1,\ldots,\alpha_{\ell}$.
The list $L$ in the theorem will be $L_{\leq \ell}$ for some $\ell\leq \ell^*$.
Let $\cconf_\ell$ be the number of new derivable output characters when
$\alpha_\ell$ is added to $L_{\leq\ell-1}$ creating $L_{\leq \ell}$. Then
\[\cconf_\ell=|\{j\in[d]\ |\ \exists x,y\in X, \alpha_\ell \in x\symmdiff y\subseteq L_{\leq\ell}, h(x)_j=h(y)_j\}|\]
The list $L_{\leq\ell}$ satisfies the theorem if $\sum_{i=1}^\ell
\cconf_i\geq q\ell$.  To prove that this is true for some
$\ell\leq\ell^*$, we study a related measure
\[\Dconf_{\leq\ell}=|\{(x,j)\in X\times[d]\;|\; \exists\,y\in X\setminus\{x\}:x\symmdiff y\subseteq L_{\leq \ell},\; h(x)_j=h(y)_j\}|.\]
Then 
\[\Dconf_{\leq \ell^*}=|\{(x,j)\in X\times[d]\;|\;\exists\,y\in X\setminus\{x\}: h(x)_j=h(y)_j\}|\]
counts with multiplicity the number of non-unique output characters from $X$.
\begin{lemma}\label{lem:non-unique}
$\Dconf_{\leq \ell^*}\geq \eps d|X|$.
\end{lemma}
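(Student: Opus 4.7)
The plan is a direct double-counting argument. First I would observe that the list $L_{\leq \ell^*} = \alpha_1,\ldots,\alpha_{\ell^*}$ enumerates \emph{every} input position character appearing in some key of $X$. In particular, for any two keys $x,y \in X$, both $x$ and $y$ are contained in $L_{\leq \ell^*}$ as sets of input position characters, so $x \symmdiff y \subseteq L_{\leq \ell^*}$ automatically. Therefore the existential condition defining $\Dconf_{\leq \ell^*}$ collapses to: $(x,j)$ is counted iff there exists $y \in X \setminus \{x\}$ with $h(y)_j = h(x)_j$.

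Next I would group by output position characters. For each $(j,a) \in [d] \times \Aout$, let $m_{j,a} = |\{x \in X : h(x)_j = a\}|$. Clearly $\sum_{j,a} m_{j,a} = d|X|$, since each key contributes one output character per position. A pair $(x,j)$ is counted in $\Dconf_{\leq \ell^*}$ precisely when $m_{j,h(x)_j} \geq 2$; hence
\[
\Dconf_{\leq \ell^*} = \sum_{(j,a):\, m_{j,a}\geq 2} m_{j,a} = d|X| - u,
\]
where $u = |\{(j,a) : m_{j,a} = 1\}|$ is the number of unique output position characters of $X$, and $N = |\{(j,a) : m_{j,a} \geq 1\}|$ is the number of distinct ones.

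It remains to bound $u$ in the two cases. In case \na{} we set $\eps = 1$ by convention, and the hypothesis that $X$ has no unique output position character means $u = 0$, so $\Dconf_{\leq \ell^*} = d|X| = \eps d|X|$. In case \nb{} the hypothesis gives $N \leq (1-\eps)d|X|$, and since trivially $u \leq N$, we obtain $\Dconf_{\leq \ell^*} = d|X| - u \geq d|X| - (1-\eps)d|X| = \eps d|X|$, as required.

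No part of this argument is really an obstacle; the only subtle point worth spelling out is the first one, namely that $L_{\leq \ell^*}$ contains every input position character of $X$, which is what lets us drop the containment condition $x \symmdiff y \subseteq L_{\leq \ell}$ at $\ell = \ell^*$ and reduce to a pure counting statement about the output position characters of $X$.
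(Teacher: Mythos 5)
Your proof is correct and takes essentially the same route as the paper: count the $d|X|$ output position characters of $X$ with multiplicity, note that the containment condition is vacuous at $\ell=\ell^*$, and subtract the unique ones, which are at most the distinct ones and hence at most $(1-\eps)d|X|$ (respectively zero when $\eps=1$). The only difference is that you spell out explicitly, via the multiplicities $m_{j,a}$, what the paper states in words.
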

\begin{proof} Each key $x\in X$ has $d$ output position characters, so with
multiplicity, the total number of output position characters from $X$
is $d|X|$. In case \na{} these are all non-unique and we have $\eps=1$.

In case \nb{} we have at most $(1-\eps)d|X|$ distinct output characters
from $X$. The number of unique output position characters must be smaller,
so with multiplicity, the total number of non-unique output characters from $X$ is bigger than $\eps d|X|$. 
\end{proof}
The following key lemma relates the two measures:
\begin{lemma}\label{lem:DCA-ind}For $\ell=2,\ldots,\ell^*$,
\begin{equation}\label{eq:DCA-ind} 
\Dconf_{\leq\ell}-\Dconf_{\leq \ell-1}\leq 2 \cconf_\ell\, n_\ell.
\end{equation}
\end{lemma}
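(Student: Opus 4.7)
}
The plan is to upper-bound $\Dconf_{\leq \ell} - \Dconf_{\leq \ell - 1}$ by counting, separately for each coordinate $j$ contributing to $\cconf_\ell$, the number of \emph{new} pairs $(x,j)$ that appear in $\Dconf_{\leq\ell}$ but not in $\Dconf_{\leq\ell-1}$, and showing this count is at most $2n_\ell$. First I would observe that if $(x,j)$ is new, then every witness $y$ for it (i.e.\ every $y\in X\setminus\{x\}$ with $x\symmdiff y\subseteq L_{\leq\ell}$ and $h(x)_j=h(y)_j$) must satisfy $\alpha_\ell\in x\symmdiff y$: otherwise $x\symmdiff y\subseteq L_{\leq\ell-1}$ and $(x,j)$ would already lie in $\Dconf_{\leq\ell-1}$. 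In particular, picking any such witness $y_x$, we get $\alpha_\ell\in x\symmdiff y_x\subseteq L_{\leq\ell}$ and $h(x)_j=h(y_x)_j$, which exactly says that $j$ is counted in $\cconf_\ell$. So every new pair uses a coordinate $j$ from the set of size $\cconf_\ell$, and it suffices to bound, for each fixed such $j$, the number of new $(x,j)$ by $2n_\ell$.

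Fix such a $j$ and consider a new pair $(x,j)$ with chosen witness $y_x$. Since $\alpha_\ell\in x\symmdiff y_x$, exactly one of the following two cases holds.

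\emph{Case A: $\alpha_\ell\in x$.} Then $x$ lies in the set $N=\{z\in X:\alpha_\ell\in z\}$, which by definition has size $n_\ell$. So this case contributes at most $n_\ell$ new pairs.

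\emph{Case B: $\alpha_\ell\notin x$.} Then $\alpha_\ell\in y_x$, so $y_x\in N$. The key step---which is where I expect the real content to lie---is to show that the map $x\mapsto y_x$ is injective on the new pairs in Case B, giving at most $|N|=n_\ell$ new pairs here as well. Suppose for contradiction that $y_x=y_{x'}=y$ for two distinct new Case-B keys $x,x'$. Then $h(x)_j=h(y)_j=h(x')_j$. Using the set-theoretic identity $x\symmdiff x'\subseteq (x\symmdiff y)\cup (y\symmdiff x')$, we get $x\symmdiff x'\subseteq L_{\leq\ell}$; moreover $\alpha_\ell\notin x$ and $\alpha_\ell\notin x'$, so $\alpha_\ell\notin x\symmdiff x'$, and therefore $x\symmdiff x'\subseteq L_{\leq\ell-1}$. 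But then $x'$ is a witness showing $(x,j)\in\Dconf_{\leq\ell-1}$, contradicting that $(x,j)$ is new.

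Combining the two cases gives at most $2n_\ell$ new pairs $(x,j)$ for each of the $\cconf_\ell$ coordinates $j$, and summing yields $\Dconf_{\leq\ell}-\Dconf_{\leq\ell-1}\leq 2\cconf_\ell\, n_\ell$, as desired. The main obstacle is really just the Case B injectivity argument; everything else is bookkeeping with the definitions of $\Dconf$, $\cconf$, $L_{\leq\ell}$, and the fact that $L_{\leq\ell}=L_{\leq\ell-1}\cup\{\alpha_\ell\}$.
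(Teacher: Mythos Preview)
Your proof is correct and follows essentially the same approach as the paper: split the new pairs according to whether $\alpha_\ell\in x$ or $\alpha_\ell\in y_x$, bound the first case by $n_\ell$ directly, and handle the second by an injectivity argument showing that two distinct new Case-B keys sharing a witness would already witness each other in $L_{\leq\ell-1}$. The paper phrases the injectivity in the reverse direction (each ``$\alpha_\ell$-pair'' has at most one new witnessing pair) and verifies $y\symmdiff z\subseteq L_{\leq\ell-1}$ by a position-by-position case analysis rather than your cleaner use of $x\symmdiff x'\subseteq(x\symmdiff y)\cup(y\symmdiff x')$, but the content is the same.
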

\begin{proof}
We want to count the pairs $(x,j)\in X\times[d]$ that are counted in $\Dconf_{\leq\ell}$
but not in $\Dconf_{\leq \ell-1}$. First we consider ``$\alpha_\ell$-pairs'' 
$(x,j)$ where $x$ contains $\alpha_\ell$ 
and there is a ``witnessing'' key $y\in X$ not containing $\alpha_\ell$ such that 
$x\symmdiff y\subseteq L_{\leq \ell}$ and
$h(x)_j=h(y)_j$. We note that in this case $(\alpha_\ell,j)$ is derivable,
so $j$ is counted in $\cconf_\ell$.  The number of $\alpha_\ell$-pairs $(x,j)$ is thus bounded by $\cconf_\ell\, n_\ell$.

With the above $x$ and $y$, we would also count the ``witnessing'' pair
$(y,j)$ if $(y,j)$ is not already counted in $\Dconf_{\leq \ell-1}$. Suppose
we have another pair $(z,j)$ witnessing $(x,j)$. Thus $x\symmdiff y,x\symmdiff y\subseteq L_{\leq \ell}$ and $h(x)_j=h(y)_j=h(z)_j$. 
We want to show that $z\symmdiff y\subseteq
L_{\leq \ell-1}$, hence that both $(y,j)$ and $(z,j)$ were already counted in $\Dconf_{\leq \ell-1}$.

All input position characters in $y\symmdiff z$ come in pairs $(i,y_i)$,
$(i,z_i)$, $y_i\neq z_i$. At least one of $y_i$ and $z_i$ is different
from $x_i$. By symmetry, assume $y_i\neq x_i$. Then $(i,y_i),(i,x_i)
\in y\symmdiff x\subseteq L_{\leq \ell}$. Therefore  $(i,z_i)\in L_{\leq \ell}$ if
$z_i=x_i$; but otherwise $z_i\neq x_i$ and $(i,z_i),(i,x_i) \in z\symmdiff
x\subseteq L_{\leq \ell}$. In either case, we conclude that $(i,y_i),(i,z_i)
\in L_{\leq \ell}$. But $\alpha_\ell$ is in neither $y$ nor $z$, so it follows
that $(i,y_i),(i,z_i) \in L_{\leq \ell-1}$, hence that $y\symmdiff z\subseteq L_{\ell-1}$.
We conclude that both $(y,j)$ and $(z,j)$ were counted in $\Dconf_{\leq \ell-1}$, or 
conversely, that we for each $\alpha_\ell$-pair $(x,i)$ have at most one
witnessing pair $(y,j)$ that is counted in $\Dconf_{\leq\ell}-\Dconf_{\leq \ell-1}$.

We conclude that the number of witnessing pairs is no bigger than the number of $\alpha_\ell$-pairs, hence that $\Dconf_{\leq\ell}-\Dconf_{\leq \ell-1}$ is at most $2\cconf_\ell\, n_\ell$. 
\end{proof}
By \req{eq:DCA-ind}, for $\ell=1,\ldots,\ell^*$, 
\begin{equation}\label{eq:DCA}
\Dconf_{\leq\ell}\leq 2 \sum_{i=1}^\ell \cconf_i\,n_i.
\end{equation}
Recall that $L_{\leq \ell}$ satisfies the statement of the theorem
if $\sum_{i=1}^\ell \cconf_i\geq q\ell$. Assume for a contradiction
that there is a $q'<q$, 
such that for all $\ell=1,\ldots,\ell^*$,
\begin{equation}\label{eq:restrict}
\sum_{i=1}^\ell \cconf_\ell\leq q'\ell.
\end{equation}
The $n_\ell$ are  decreasing, so the $\cconf_\ell$ values that 
satisfy \req{eq:restrict}
and maximize the sum in \req{eq:DCA} are all equal to $q'$. Thus
\req{eq:DCA} and \req{eq:restrict} implies that 
\[\Dconf_{\leq\ell}\leq 2 \sum_{i=1}^\ell \cconf_i\,n_i
\leq 2\sum_{i=1}^\ell  q' n_i< 2q\sum_{i=1}^\ell n_i.
\]
In particular, we get 
\begin{equation}\label{D-bound}
\Dconf_{\leq \ell^*}<2q\sum_{i=1}^{\ell^*} n_i=2q|X|c.
\end{equation}
Since $q=\eps d/(2c)$, this contradicts Lemma \ref{lem:non-unique}. Thus we 
conclude that there is an $\ell$ such that $L_{\leq \ell}$ satisfies the theorem. 
This completes the proof of Theorem\tref{thm:key-main}, hence of
Theorem\tref{thm:main}.
\end{proof}

\section{Higher independence with recursive tabulation}\label{sec:recurse}
We will now  use recursive tabulation to get
the higher independence promised in Theorem\tref{thm:recurse}:
\begin{quote}{\em
Let $u=|U|$ and $c^{\,c^2}=u^{o(1)}$, 
With universal probability $1-o(1/u^{1/c})$, using space $o(u^{1/c})$, we can get
$u^{\Omega(1/c)}$-independent hashing from $U$ to $R$ in $o(c^{\lg_2 c})$ time.
}\end{quote}
\begin{proof}[ of Theorem  {\ref{thm:recurse}}]
For simplicity, we assume that $u$ is a power of a power of two.
Let $\ell=\ceil{\lg_2 c}+1$,
$c'=2^\ell$, and $\Ain=[u^{1/c'}]$. The independence we aim for
is $k=u^{1/(10 c')}=u^{\Omega(1/c)}$.

Our construction is a recursion of depth $\ell$.
On level $i=0,...,\ell-1$ of the recursion, the input key universe is
$U_{(i)}=[u^{1/2^i}]$, and we want a $k$-independent
hash functions $U_{(i)}\fct R$.  The set of input characters
will always be $\Ain=[u^{1/c'}]$, so on level $i$, we have
$c_{(i)}=c'/2^{i}$ input characters.  We apply 
Theorem\tref{thm:main} with $d_{(i)}=12c_{(i)}$ output
characters from $\Aout_{(i)}=U_{(i+1)}$. With universal probability $1-|\Phi|^2/\Aout_{(i)}^6$, Theorem~\ref{thm:main} gives
us a simple tabulation function $h_{(i)}:\Ain^{c_{(i)}}\fct \Aout_{(i)}^{d_{(i)}}$ with uniqueness
\[|\Aout_{(i)}|^{1/(5c_{(i)})}=\left(u^{1/2^{i+1}}\right)^{1/(5(c'/2^{i}))}
\geq u^{1/(10 c')}=k\textnormal,\] as desired. To get $k$-independence
from $U_{(i)}$ to $R$, as in Lemma \ref{lem:siegel}, we compose
$h_{(i)}$ with a simple tabulation function
$r_{(i)}:\Aout_{(i)}^{d_{(i)}}\fct R$ where the character tabulation
functions $r_{(i),j}:\Aout_{(i)}\fct R$ have to by $k$-independent and independent of each other. Here
$\Aout_{(i)}=U_{(i+1)}$, and the $r_{(i),j}$ are constructed
recursively. At the last recursive level, the output characters
are from $\Aout_{(\ell-1)}=U_{(\ell)}=[u^{1/2^\ell}]=\Ain$. We will store an independent
random character table for each of these output characters.

On each recursive level $i< \ell$, we can use the
same universal $k$-unique simple tabulation function $h_{(i)}:\Ain^{c_{(i)}}\fct \Aout_{(i)}^{d_{(i)}}$. However, on
the bottom level, we need independent random character tables for all
the output characters. The total
number of output characters on the bottom level is
\[D=\prod_{i=0}^{\ell-1} d_{(i)}=\prod_{i=0}^{\ell-1} 12c'/2^i\leq
O(\sqrt{c})^{\lg_2 c}.\]
Handling all of these, on the bottom level, we
have a single large simple tabulation function
$r:\Ain^D\fct R$ where the $D$ character tables are fully random tables supporting look-ups in constant time.

On recursive level $i<\ell$, the size of the intermediate domain $\Aout_{(i)}^{d_{(i)}}$
is $\left(u^{1/2^{i+1}}\right)^{12c'/2^{i}}=u^{6c'}$. The
elements from this domain thus use $O(c)$ words. It follows that
the space used by $h_{(i)}$ is $O(c_{(i)}|\Ain|c)$, and that
its evaluation time from \req{eq:simple-table} is $O(c_{(i)}c)=O(c^2/2^i)$.

We only represent a single universal function $h_{(i)}$ on each level $i<\ell$, to
the total space is clearly dominated by the $D$ tables on the bottom
level.  The total space is therefore $O(D|\Ain])=O(\sqrt c)^{\lg_2 c}
  u^{1/c'}=o(u^{1/c})$.

The evaluation time is actually dominated by the calls from level $\ell-1$.
Formally a recursive evaluation from the last recursive 
level $i\leq\ell$ takes time
\begin{align*}
T(\ell)&=O(1)\\
T(i)&=O(c^2/2^i)+d_{(i)} T(i+1)\textnormal{ for }i=0,\ldots,\ell-1
\end{align*}
Our evaluation time is thus $T(0)=O(cD)=o(c^{\lg c})$.

The probability that any of the universal $h_{(i)}$ is not $k$-unique
is bounded by $\sum_{i=0}^{\ell-1}o(|\Phi|^2/\Aout_{(i)}^6)=
o(|\Phi|^2/\Aout_{(\ell-1)}^6)=o(1/|\Phi|^4)=o(1/u^{1/c})$.
\end{proof}
Let $k=u^{\Omega(1/c)}$ be the independence obtained in the above
proof. Consider the  $\ell-1$ recursive levels.
They compose into a function $f:U\fct \Ain^D$, and
we know that $r\circ f$ is $k$-independent. The interesting
point here is that we do not expect $f$ to be $k$-unique.

In \cite[Proposition 2]{thorup12kwise}, or identically, \cite[Theorem 3]{KW12},
is given an exact characterization of the functions $f:U\fct \Ain^D$
that yield $k$-independence when composed with random simple tabulation hashing
$h':\Ain^D\fct R$. The requirement is that every set $X$ of size at most $X$
has some output character appearing an odd number of times. Our $f$ must
satisfy this {\em $k$-odd\/} property. 

We can also make a direct proof that our $f$ is $k$-odd. Noting that
any $k$-unique function is $k$-odd, we just have to
apply the following lemma to our
recursion:
\begin{lemma}
Consider $k$-odd functions $f:U\fct \Ain^c$ and 
$g_i:\Ain\fct\Aout^{d}$, $i\in [c]$.
Let $F:U\fct \Aout^{c\times d}$ be defined
by $F(x)_{(i,j)}=g_i(f(x)_i)_j$. Then $F$ is $k$-odd.
\end{lemma}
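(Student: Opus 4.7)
The plan is to chase an odd-multiplicity output position character through the composition, reducing the task to one application of $k$-oddness for $f$ followed by one for some $g_i$. Given any $X\subseteq U$ with $|X|\le k$, I would first invoke $k$-oddness of $f$ to obtain an input-side witness: a position $i^*\in[c]$ and a character $a\in\Ain$ such that the position character $(i^*,a)$ occurs an odd number of times in the multiset $\{(i,f(x)_i):x\in X,\ i\in[c]\}$.

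Next I would localize to coordinate $i^*$. Let $Y$ denote the multiset $\{f(x)_{i^*}:x\in X\}$ and let $W\subseteq\Ain$ be the set of elements appearing with odd multiplicity in $Y$. By the choice of $(i^*,a)$ we have $a\in W$, so $W$ is nonempty; moreover $|W|\le|Y|\le|X|\le k$. Applying $k$-oddness of $g_{i^*}$ to the set $W$ yields a position character $(j^*,b)\in[d]\times\Aout$ that occurs an odd number of times in $\{(j,g_{i^*}(w)_j):w\in W,\ j\in[d]\}$. My candidate witness for $F$ is the output position character $((i^*,j^*),b)$.

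The heart of the argument is a parity calculation. The multiplicity of $((i^*,j^*),b)$ among the outputs of $F$ on $X$ equals
\[
N:=\bigl|\{x\in X:g_{i^*}(f(x)_{i^*})_{j^*}=b\}\bigr|=\sum_{\substack{w\in\Ain\\ g_{i^*}(w)_{j^*}=b}}\operatorname{mult}_Y(w).
\]
Reducing modulo $2$, $\operatorname{mult}_Y(w)$ collapses to the indicator $[w\in W]$, so
\[
N\equiv\bigl|\{w\in W:g_{i^*}(w)_{j^*}=b\}\bigr|\pmod 2,
\]
which is odd by the choice of $(j^*,b)$. Hence $F$ produces an output position character of odd multiplicity on $X$, so $F$ is $k$-odd.

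I do not anticipate a real obstacle here: the only non-bookkeeping step is the mod-$2$ collapse, and that is automatic once one isolates the correct set $W$ of odd-multiplicity preimages. The bound $|W|\le k$ — which is what licenses the second application of $k$-oddness — is the subtle point worth flagging, but it follows immediately from $|W|\le|Y|\le|X|$.
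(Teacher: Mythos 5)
Your proof is correct and follows the same route as the paper: apply $k$-oddness of $f$ to get a position $i$, take the set of characters with odd multiplicity at that position (your $W$ is the paper's $Y_i$, with the same bound $|W|\le k$), apply $k$-oddness of $g_i$ to it, and conclude. The only difference is that you spell out the final mod-$2$ multiplicity calculation, which the paper leaves implicit.
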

\begin{proof}
Consider some set $X$ of size at most $k$. Since $f$ is $k$-odd, 
there is some $i\in[c]$ so that some $(i,a)\in [c]\times \Ain$ 
appears an odd number of times when $f$ is applied to $X$. Let $Y_i$ be the
set of $a\in \Ain$ for which $(i,a)$ is one of
the output position characters that appears an odd number of
times when $f$ is applied to $X$. 
Trivially $|Y_i|\leq k$, so there is an output position 
character $(j,b)\in [d]\times\Aout$ that appears an odd number of times
when $g_i$ is applied to $Y_i$. Then $((i,j),b)$ must also 
appear an odd number of times when $F$ is applied to $X$.
\end{proof}

\section{Counting with care}\label{sec:care}
Over the next two sections, we are now going tighten the analysis from 
Section \ref{sec:basic}. In particular, this will allow us to derive the concrete values from 
Theorem\tref{thm:practice} with no reference to asymptotics. As in Section \ref{sec:basic}, we parameterize our analysis by the length $\ell$ of the list
$L$ of input characters. Later we will add up over relevant lengths $\ell\leq \ell^*=kc$. Using Theorem\tref{thm:key-main}
we fix $q=\eps d/(2c)$ with $\eps=1$ if we are only
interested in uniqueness.

\paragraph{Removing order from the list}
Our first improvement is to argue that we do not need store the order
of the list $L$, i.e., we can just store $L$ as a set. This
immediately saves us a factor $\ell!$, that is, $\choicesl(L)\leq
(c|\Ain|)^\ell/ \ell!<(ec|\Ain|/\ell)^\ell$.  

With $L$ only a
set, an equation $(A,j)$, $A\subseteq L$, $j\in[d]$ still
has the same denotation that $h(A)_j=0$. However, it was the
ordering of $L$ that determined the specified output index
$(\alpha,j)$ with $\alpha$ being the last element
from $A$ in $L$. Changing the ordering of $L$ thus changes the
specified output indices. This may be OK as long as no two equations 
from $M$ specify the same output index.

When $L$ is only given as an unordered set and when we are further given
a set $M$ of equations, we implicitly assign $L$ the lexicographically
first order such that no two equations from $M$ specify the same
output index. This lexicographically first order replaces original order of
$L$ that we found in Section \ref{sec:findingL}. It redefines
the set $I$ of output indices specified by $M$, hence
the set $(L\times[d])\setminus I$ of output indices that have to
be covered by the table $H$.

\paragraph{Equation count}
We will now give a better bound on the number $\choicesl(M)$ of possibilities for
our set $M$ of $\mq$ equations. 
We know that our equations are of the form  $(A,j)$ where $A=x\symmdiff y\subseteq L$ 
for keys $x,y\in X$. More specifically, we have $A=x\symmdiff y=
\{(x_i,i),(y_i,i)|i\in [c], x_i\neq y_i\}$. Let $L_i$ be the set of
input position characters from $L$ in position $i$ and let $\ell_i$ be their
number. Let us assume for now that $\ell_i\geq 2$ for all $i\in[c]$.
If this is not the case, we will later derive even better bounds with
$c'<c$ active positions.

To describe $A$, for each $i\in [c]$, we pick either
two elements from $L_i$ or none. Since $\ell_i\geq 2$, this
can be done in ${\ell_i\choose 2}+1\leq \ell_i^2/2$ ways. The
total number of ways we can pick $A$ is thus
\[\choiceslc(A)\leq \prod_{i\in[c]}\ell_i^2/2\leq 
((\ell/c)^2/2)^{c}.\]
For an equation, we also need $j\in [d]$. We need $\mq\geq \ell \eps d/(2c)$ equations for $M$. 
We conclude that
\begin{align*}
\choiceslc(M)
&\leq {\choiceslc(A)\cdot d \choose \mq}
\leq \left(\frac{e((\ell/c)^2/2)^c d}{\mq}\right)^\mq\\
&\leq \left(\frac{e((\ell/c)^2/2)^c d}{\ell \eps d/(2c)}\right)^\mq
\leq \left(e(\ell/c)^{2c-1}/(\eps 2^{c-1})\right)^\mq.
\end{align*}
Plugging our new bounds into \req{eq:basic}, we get
\begin{align*}
\Plc&\leq 
\frac{\choicesl(L)\cdot \choiceslc(M)}{|\Aout|^{\mq}}\\
&\leq (ec|\Ain|/\ell)^\ell\cdot \left(e(\ell/c)^{2c-1}/(\eps 2^{c-1}|\Aout|)\right)^{\mq}\\
&\leq \left((ec|\Ain|/\ell)\cdot \left(e(\ell/c)^{2c-1}/(\eps 2^{c-1}|\Aout|)\right)^{q}\right)^\ell\\
&=\left((ec|\Ain|/\ell)\cdot \left(e(\ell/c)^{2c-1}/(\eps 2^{c-1}|\Aout|)\right)^{\eps d/(2c)}\right)^\ell.
\end{align*}
As with \req{eq:cheat}, we note that replacing the exponent $\mq$ with
$ q\ell$ is valid whenever the probability bound is not bigger than
$1$.  
Above we assumed that $L$ contained two characters in all $c$ positions. 
In particular, this implies $\ell\geq 2c$.
If this $L$ contains less than two characters in some position, then that position has no effect. There
are ${c\choose c'}$ ways that we can choose $c'$ active positions,
so the real bound we get for $\ell\leq \ell^*$ is:
\[\sum_{c'\leq c}\left({c \choose c'} \sum_{\ell=2c'}^{\ell^*}P_{\ell,c'}\right)\]
The above bound may look messy, but it is easily evaluated by
computer. For $k$-independence, we just need $k$-uniqueness, so
$\eps=1$.  As an example, with 32-bit keys, $c=2$,
$d=20$, and $\ell^*=32$, we get a
total error probability less than $2.58\times 10^{-31}$. This
only allows us to rule out $k=\ell^*/c=16$, but in the next section,
we shall get up to $k=100$ with an even better error probability.

\section{Coding keys}\label{cd:key-codes}
Trivially, we have $\ell\leq kc$ since
$kc$ is the total number of input position characters in the $k$ at most keys. 
However, as $\ell$ approaches
$k$, we can start considering a much more efficient encoding, for
then, instead of encoding equations by the involved input position characters,
we first encode the $k$ keys, and then get a much cheaper encoding of
the equations.

Our goal is to find efficient encodings of symmetric differences
$x\symmdiff y\subseteq L$ where $x,y\in X$. We would like to 
use $L$ to code all the keys in $X$. With that done, to
describe $x\symmdiff y$, we just need to reference $x$ and $y$. A small technical issue is that
$x\in X$ may contain characters not in $L$. As in Section
\ref{sec:care}, we assume that each position $i\in [c]$
is active with least two input position characters $(i,a)\in L$. 
Out of these, we pick a default position character
$(i,a_i)$.  Now if key $x$ contains $(i,x_i)\not\in L$, we replace it
with $(i,a_i)$.  Let $x'$ the result of making these default
replacements of all input position characters outside $L$. Then $x'\subseteq L$.
Moreover, given any two keys $x,y\in X$, if $x\symmdiff y\subseteq L$,
then $x'\symmdiff y'=x\symmdiff y$. 
Each key $x'$ is now described with $c$ characters from $L$, one for each
position, and we get the maximum number of combinations if there
are the same number in each position, so there are at most
$(\ell/c)^c$ combinations. 

Instead of just coding $X'=\{x'\;|\; x\in X\}$, for simplicity, we code
a superset $Y'\supseteq X'$ with exactly $k$ keys. The number of possible $Y'$ is bounded by 
\[{(\ell/c)^c \choose k}<
\left(\frac{e(\ell/c)^c}k\right)^k.\]
An equation is now characterized by two keys from $X'$ and an output
position $j\in [d]$, leaving us ${k\choose 2}d<k^2d/2$ possibilities. We can
therefore pick $\ceil{q\ell}$ equations in less than
\[{k^2d/2\choose \mq}<\left(\frac{e(k^2d/2)}{\mq}\right)^{\mq}
\leq \left(\frac{e(k^2c)}{\eps \ell}\right)^{\mq}\]
ways. Our probability bound for 
a given $\ell$ is thus
\begin{equation}\label{eq:key-codes}
Q^k_{\ell,c}=(ec|\Ain|/\ell)^\ell\left(\frac{e(\ell/c)^c}k\right)^k \left(\frac{e(k^2c)}{\eps \ell|\Aout|}\right)^{\mq}\leq
(ec|\Ain|/\ell)^\ell\left(\frac{e(\ell/c)^c}k\right)^k \left(\frac{e(k^2c)}{\eps \ell|\Aout|}\right)^{\eps d\ell/(2c)}. 
\end{equation}
We are always free to use the best of our probability bounds, so
with $c$ active positions, the probability of getting a list $L$ of size $\ell$
is bounded by $\min\{P_{\ell,c},Q^k_{\ell,c}\}$. Allowing $c'\leq c$ active
positions and considering all lengths $\ell=2c',\ldots,kc'$, we get the overall probability bound
\begin{equation}\label{eq:practice}
\sum_{c'=1}^c\left({c \choose c'} \sum_{\ell=2c'}^{kc'}\min\{P_{\ell,c'},Q^k_{\ell,c'}\}\right).
\end{equation}

\begin{proof}[ of Theorem {\ref{thm:practice}}]
To prove Theorem\tref{thm:practice}, we used $\eps=1$ for uniqueness, and evaluated the sum
\req{eq:practice} with the concrete parameters using a computer.
\end{proof}

We note that we can easily boost the confidence that a random $h$ is
$k$-unique. The basic point is that a lot of the error probability
comes from short codes. With a computer we could, for example,  
check all codes derived from lists $L$ of length $\ell\leq 4$. If
none of these match $h$, then we could start the sum in \req{eq:practice}
from $\ell=5$.

\drop{\paragraph{Acknowledgment} I would like to thank 
Mihai P\v{a}tra\c{s}cu. He suggested the use of a coding based proof instead of my original counting based proof.}

%\begin{small}
%\nocite{PP08}
%\bibliographystyle{abbrv} %{plain}
%\bibliography{general}
%\end{small}

\appendix
\section{From uniqueness to independence}\label{app:siegel}
In this appendix, for completeness, we prove Lemma \ref{lem:siegel}:
\begin{quote}{\em
Let $f:U\fct \Aout^d$ be a $k$-unique function.  Consider a simple
tabulation function $h':\Aout^d\fct R$ where the character tables
$h'_j:\Aout\fct R$, $j\in[d]$, are independent of each other, and
where each $h'_j$ is $k$-independent. Then $h'\circ f:U\rightarrow R$
is $k$-independent.  
}\end{quote} 
The proof is basically the same as
that of Siegel's Lemma 2.6 and Corollary 2.11 in \cite{siegel04hash}, but we need to be
position sensitive. This is a trivial transformation, but since the
proof is so easy, we include it to make our presentation self-contained.
\begin{proof}[ of Lemma \ref{lem:siegel}] 
Take any set $X$ of size at most $k$. We want to show
that $h'\circ f$ hash the keys from $X$ independently. 
First assume that all the character tables $h'_j$ are completely random.
By $k$-uniqueness of $f$, some key $x\in X$
has a unique output position character $(j,f(x)_j)$. Then $x$ is the only key
whose hash $h'(f(x))$ involves the character table value $h'_j(f(x)_j)$. 
If we now fix all other character table values, hence
the hashes of all other keys in $X$, then a uniform
choice of $h'_j(f(x)_j)$ fixes $h'(f(x))$ uniformly.
The hash
of $x$ is therefore independent of the other hashes from $X$. The independence between the remaining hashes values from $X$ follows by induction. The
intermediate
vectors from $f(X)$ have only one character at each position, so the hashing
of $X$ involves at most $k$ values from each $h'_j$. It therefore suffices
that each $h'_j$ is $k$-independent.
\end{proof}

\label{last_page}
\end{document}